\DeclareMathOperator{\Tr}{Tr}
\DeclareMathOperator{\rank}{rank}
\newtheorem{remark}{Remark}
\newtheorem{lemma}{Lemma}
\newtheorem{corollary}{Corollary}
\newtheorem{proposition}{Proposition}
\begin{document}
	
	\title{Near-Field Multi-Cell ISCAP with Extremely Large-Scale Antenna Array }
	
	\author{Yuan Guo,~\IEEEmembership{Member,~IEEE,}
		Yilong~Chen,~\IEEEmembership{Graduate Student Member,~IEEE,}
		Zixiang~Ren,~\IEEEmembership{Graduate Student Member,~IEEE,}
		Derrick~Wing~Kwan~Ng,~\IEEEmembership{Fellow,~IEEE,}
		and~Jie~Xu,~\IEEEmembership{Fellow,~IEEE}
		\IEEEcompsocitemizethanks{\IEEEcompsocthanksitem Yuan Guo, Yilong Chen, Zixiang Ren, and Jie Xu are with the School of Science and Engineering (SSE), the Shenzhen Future Network of Intelligence Institute (FNii-Shenzhen), and the Guangdong Provincial Key Laboratory of Future Networks of Intelligence, The Chinese University of Hong Kong, Shenzhen, Guangdong 518172, China (e-mail: guoyuan@cuhk.edu.cn, lilongchen@link.cuhk.edu.cn, rzx66@mail.ustc.edu.cn, xujie@cuhk.edu.cn).\\	
		\indent  Derrick~Wing~Kwan~Ng is with the School of Electrical Engineering and Telecommunications, University of New South Wales, Sydney, NSW 2052, Australia (e-mail: w.k.ng@unsw.edu.au).\\
		\indent Jie Xu is the corresponding author.}
		% <-this % stops a space
	}

	% Remember, if you use this you must call \IEEEpubidadjcol in the second
	% column for its text to clear the IEEEpubid mark.
	
	\maketitle
	
	\begin{abstract}
		This paper investigates a coordinated multi-cell integrated sensing, communication, and powering (ISCAP) system operating in the electromagnetic near field, where each base station (BS) employs an extremely large-scale antenna array (ELAA) to simultaneously support downlink communication, wireless power transfer (WPT), and environmental sensing. Three categories of communication users (CUs) with different interference cancellation capabilities are considered, and sensing is enabled through a distributed multiple-input multiple-output (MIMO) radar architecture. To address the resulting design challenges, a robust optimization framework is proposed by optimizing the beamforming strategy to maximize the worst-case detection probability over a prescribed sensing region, subject to  per-user signal-to-interference-plus-noise ratio (SINR) constraints and energy harvesting requirements at energy receivers (ERs), while explicitly capturing the uncertainty in ER locations. By leveraging semidefinite relaxation (SDR), the original non-convex problem is reformulated as a convex semidefinite program with a provably tight relaxation. Furthermore, a low-complexity maximum ratio transmission (MRT)-based suboptimal scheme is developed, yielding a closed-form solution in the asymptotic regime as the number of antenna elements approaches infinity. Extensive numerical results reveal the fundamental trade-offs among sensing accuracy, communication reliability, and WPT efficiency. Specifically, our results demonstrate that: i) the proposed SDR- and MRT-based coordinated designs consistently outperform the benchmark scheme without inter-cell coordination; ii) greater ER location uncertainty degrades overall system performance, while the proposed spatially averaged robust framework outperforms conventional worst-case robust WPT designs; iii) receiver-side interference cancellation improves sensing performance only when CUs are positioned near the sensing area; iv) although larger BS power budgets enhance sensing performance, this improvement is limited by stringent false alarm requirements; and v) near-field ISCAP systems leveraging ELAAs provide clear advantages over far-field configurations by enabling fine-grained spatial resolution for joint energy focusing and interference suppression.

	\end{abstract}

	\begin{IEEEkeywords}
		Integrated sensing, communication, and powering (ISCAP), near-field, extremely large-scale antenna array (ELAA).
	\end{IEEEkeywords}
	\section{Introduction}
	The advent of sixth-generation (6G) wireless networks promises revolutionary capabilities, enabling advanced applications such as extended reality, autonomous vehicles, and industrial automation, all of which demand ultra-high data rates, ultra-low latency, centimeter-level positioning, and sustainable wireless power delivery for billions of Internet-of-Things (IoT) devices \cite{recommendation2023framework}. However, legacy network architectures, which typically segregate communication, sensing, and power delivery functions, are increasingly inadequate for satisfying these heterogeneous and demanding requirements efficiently.  To this end, integrated sensing, communication, and powering (ISCAP) has emerged as a promising paradigm, establishing a unified framework that simultaneously supports high-throughput data transmission, precise environmental sensing, and reliable power transfer \cite{yilongiscap,yilongofdm,aref2025otsm,ren2025robust}. 
	
	Recent advances have demonstrated that such integration is indeed technically feasible through meticulous co-design at the signal, protocol, and hardware levels \cite{liu2022integrated,hua2023optimal,zhang9540344,clerckx2018fundamentals,ng2013wireless,xu2014multiuser}. In particular, integrated sensing and communication (ISAC) has shown that environmental sensing and data transmission can efficiently share spectrum and infrastructure, enabling novel capabilities such as radar-aided beam alignment and communication-assisted target tracking, thereby enhancing both sensing and communication performance \cite{liu2022integrated,hua2023optimal,zhang9540344}. In parallel, simultaneous wireless information and power transfer (SWIPT) enables the joint transmission of data and power, facilitating battery-less device operation, dense IoT deployments, and long-term network sustainability in scenarios where frequent battery replacement or recharging are impractical \cite{clerckx2018fundamentals,ng2013wireless,xu2014multiuser}. Building on these foundations, ISCAP unifies ISAC and SWIPT in a pragmatic air-interface, in which a co-designed transmit signal concurrently supports all three core functionalities envisioned for upcoming 6G networks  \cite{yilongiscap,yilongofdm,zhou2024integrating,hao2024energy,aref2025otsm,ren2025robust}. In particular, the pioneering work in \cite{yilongiscap} introduced  ISCAP in a multiple-input multiple-output (MIMO) system with unified signal design, which was further extended to a practical orthogonal frequency-division multiplexing (OFDM)-based system in \cite{yilongofdm} and an orthogonal time sequency multiplexing (OTSM)-based millimeter-wave system in \cite{aref2025otsm}. Furthermore, the authors in \cite{zhou2024integrating} investigated ISCAP for multi-user scenarios, while \cite{hao2024energy} proposed an energy-efficient hybrid analog–digital beamforming design with dynamic radio-frequency (RF) chain control to jointly optimize multi-functional performance. While ISCAP offers significant gains in spectral efficiency and energy utilization, through reduced hardware duplication and signaling overhead, it also introduces critical challenges in beam management, interference mitigation, and multi-objective resource allocation, which call for novel infrastructure designs and advanced signal processing techniques to fulfill the diverse requirements of co-existing functions.

	A key enabler for realizing ISCAP is the deployment of extremely large-scale antenna arrays (ELAAs), which offer unprecedented spatial resolution and beamforming precision by operating in the near-field regime. \textcolor{black}{Leveraging these advantages, numerous recent studies have investigated ELAA-enabled near-field communications as a cornerstone technology for 6G wireless networks \cite{liu10716601,liu2023near,liu10944643,wang2023near,Ouyang10639537,ZhaoNear2024,Hua11030222,LWC2022,demarchou2022energy,JSAC2024,zhang2024simultaneous}.} Specifically, the authors in \cite{liu10944643,liu10716601} highlighted the pivotal role of ELAA in 6G infrastructure through  comprehensive surveys of the fundamental principles of near-field communication. Moreover, \cite{wang2023near,ZhaoNear2024,Hua11030222} demonstrated that the distance-dependent nature inherent to the near-field propagation enhances ISAC, by enabling improved trade-offs between sensing and communication compared to conventional far-field designs. \textcolor{black}{Additionally, \cite{LWC2022} proposed a fast near-field beam training scheme for extremely large-scale arrays to efficiently estimate distance–angle parameters, highlighting practical aspects of near-field channel acquisition in single-cell systems.} Furthermore, the work in \cite{demarchou2022energy} highlighted that near-field energy focusing capability enables substantially more efficient wireless power transfer (WPT) compared to the far-field counterparts. \textcolor{black}{The work in \cite{JSAC2024} investigated SWIPT in mixed near- and far-field channels, proposing joint beam scheduling and power allocation for energy delivery in a single-cell scenario.} Moreover, the authors in \cite{zhang2024simultaneous} revealed that, in contrast to far-field SWIPT, a single near-field beam can simultaneously concentrate energy at multiple user locations, reducing beamforming complexity while ensuring joint power and information delivery. More recently, the work  in \cite{ren2025robust} studied near-field ISCAP from a physical-layer security perspective, showing that secure beamforming leveraging ELAAs can effectively balance confidentiality, sensing accuracy, and power transfer.
	
	In practice, near-field propagation has always existed in practical wireless systems, although its impact was historically negligible due to the short Rayleigh distance resulting from the limited apertures of conventional antenna arrays \cite{cong2024near}. With the emergence of ELAAs, however, the Rayleigh distance increases dramatically, extending the near-field region to several hundred meters, rendering near-field propagation effects an important consideration in practical communication system design \cite{liu10716601}. Meanwhile, to accommodate the anticipated demands for ultra-high throughput and connectivity in future 6G networks, their architectures are expected to become significantly denser, featuring multiple ELAA-equipped base stations (BSs) deployed in close proximity \cite{xu2025toward}. This densification creates overlapping near-field regions and strong inter-cell spatial coupling, particularly in device-dense environments. \textcolor{black}{Such dense deployments naturally give rise to cooperative multi-BS operation, where BSs jointly perform communication, sensing, and WPT within shared coverage regions. This coordinated architecture aligns well with the envisioned 6G network evolution toward integrated multi-functional infrastructures, motivating the adoption of a near-field multi-cell ISCAP framework as a physically consistent and practically relevant system model.} Recent studies have examined multi-cell and cell-free systems under such practical near-field conditions from the perspectives of distance-aware channel modeling, interference-aware user scheduling, and system-level evaluation with reconfigurable intelligent surfaces (RISs), with a primary focus on enhancing communication performance (e.g., \cite{xie2024near,PA10892231,yuan2025reconfigurable}). However, this communication-centric perspective falls short of addressing how near-field multi-cell systems can be practically and efficiently designed to support the broader multi-functional capabilities envisioned for 6G networks, such as sensing and WPT. This challenge is further compounded by the nature of energy receivers (ERs), which are typically passive or low-power IoT devices incapable of actively reporting their locations due to energy and hardware constraints \cite{Li9184896,aziz2019battery}. As a result, their positions are only coarsely known and may vary within bounded regions due to mobility or placement uncertainty. Collectively, these challenges underscore the need to explore the full potential of near-field propagation in multi-cell ISCAP systems by tackling key challenges in interference management, multi-functional resource allocation, and robust system design under ER spatial uncertainty, which remains largely unexplored in the current literature.

	Motivated by the aforementioned research gaps, this paper investigates a coordinated multi-cell ISCAP framework in the near-field regime of ELAA-enabled networks, with a specific focus on scenarios involving uncertain ER locations. Our main contributions are summarized as follows:
	\begin{itemize}
		\item We propose an ELAA-enabled near-field ISCAP multi-cell system, which leverages a distributed MIMO radar architecture with signal-level fusion,  accounts for heterogeneous communication users (CUs) with distinct interference cancellation capabilities, and captures spatial uncertainty in ER locations. To this end, a robust optimization framework is developed to maximize the worst-case sensing performance over a designated target area in terms of detection probability, by jointly designing the information beamforming vectors and the covariance matrices of dual-purpose signals for sensing and WPT, subject to per-CU signal-to-interference-plus-noise ratio (SINR) constraints and average energy harvesting requirements at individual ERs.
		
		\item By leveraging the semidefinite relaxation (SDR) technique, we reformulate the original non-convex problem into a convex semidefinite program (SDP) and rigorously prove its relaxation tightness, thereby ensuring the attainment of globally optimal solution. Moreover, aiming to reduce computational complexity, we further develop a low-complexity maximum ratio transmission (MRT)-based sub-optimal design, by fixing beam directions for communication, powering, and sensing, where a closed-form solution is obtained in the asymptotic regime as the number of antenna elements is sufficiently large.

		\item Extensive simulation results reveal the inherent trade-offs among sensing accuracy, communication reliability, and WPT efficiency under varying SINR thresholds, energy demands, and CU/ER deployments. It is shown that larger ER uncertainty regions degrade overall system performance, while the proposed spatially averaged robust framework achieves superior performance  to conventional worst-case robust approaches for handling WPT location uncertainty. The proposed SDR-based optimal and MRT-based suboptimal schemes are shown to consistently outperform non-coordinated baselines, benefiting from coordinated transmission for efficient resource allocation and interference management.
		
		\item Moreover, our results show that receiver-side interference cancellation offers noticeable sensing gains only when CUs are located near the sensing region. Detection performance is more sensitive to ER powering demands than to CU SINR thresholds, which increases with larger BS power budgets but diminishes under stricter false alarm constraints.  Finally, comparisons with far-field setups highlight the distinct advantage of near-field ISCAP, which provides finer spatial resolution for joint energy focusing and interference suppression in dense multi-cell environments.

	\end{itemize}
	
	The rest of this paper is organized as follows. Section \ref{syslabel} describes the system model of the proposed framework. Section \ref{iscapfunctions} introduces the details of ISCAP functionality and formulates the joint optimization problem. Section \ref{optimalsolu} presents the SDR-based optimal and MRT-based sub-optimal solutions. Section \ref{numres} provides numerical results with comprehensive discussions, followed by our conclusions in Section \ref{conlabel}.

	\textit{Notations:} Throughout this paper, vectors and matrices are denoted by boldface lower- and upper-case letters, respectively. $\mathbb{C}^{N \times M}$ and $\mathbb{R}^{N\times M}$ denote the spaces of $N \times M$ matrices with complex and real entries, respectively. For a square matrix $\bm{A}$, $\text{Tr}(\bm{A})$ denotes its trace, and $\bm{A} \succeq \bm{0}$ indicates that $\bm{A}$ is positive semi-definite. For a complex matrix $\bm{A}$ of arbitrary size, $\text{rank}(\bm{A})$, $\bm{A}^{T}$, and $\bm{A}^{H}$ represent its rank, transpose, and complex conjugate transpose, respectively. \(\mathcal{CN}(\bm{x}, \bm{Y})\) denotes the circularly symmetric complex Gaussian (CSCG) distribution with mean vector \( \bm{x} \) and covariance matrix \( \bm{Y} \). The expectation with respect to random variable $x$ is denoted as $\mathbb{E}_x\{\cdot\}$, and $\| \cdot \|$ denotes the Euclidean norm of a vector.  $\bm{A}\odot\bm{B}$ denotes the Hadamard product of two matrices $\bm{A}$ and $\bm{B}$.

	\section{System Model}\label{syslabel}
	\begin{figure*}[t]
		\centering
		\includegraphics[width=0.7\linewidth]{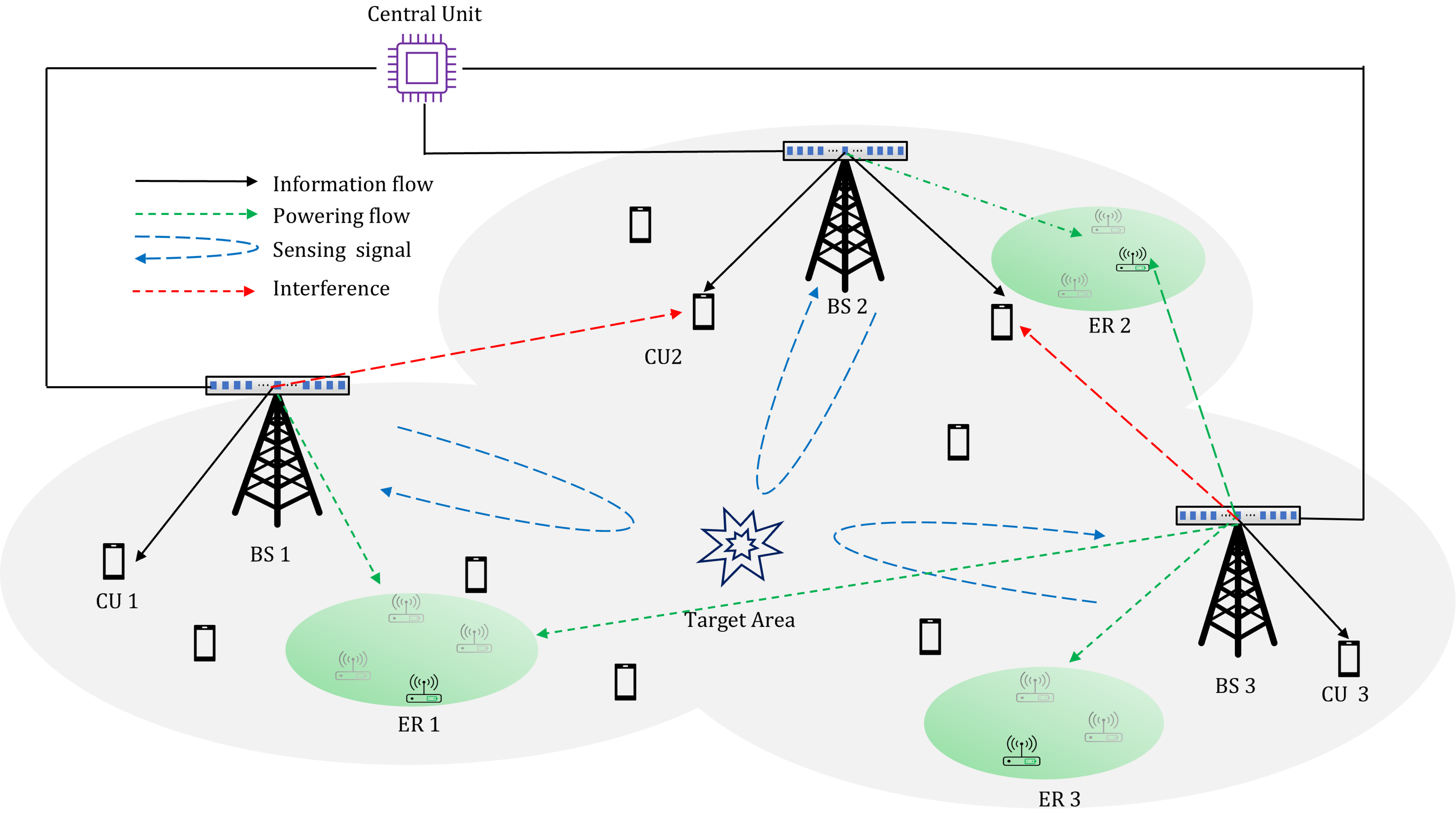}
		\caption{Illustration of the near-field coordinated multi-cell ISCAP system.}
		\label{systemfigures}
	\end{figure*}
	In this section, we present the details of the considered system model. The key mathematical notations used throughout this paper are summarized in Table \ref{notations}.
	\subsection{Network Model}\label{netlabel}
	We consider a coordinated multi-cell downlink network consisting of $K$ BSs with ISCAP capabilities, as illustrated in Fig.~\ref{systemfigures}. Let $\mathcal{K} = \{1, \dotsc, K\}$ denote the set of BSs. Each BS is equipped with an identical uniform linear array (ULA) of $N \gg 1$ antenna elements, with inter-element spacing $d$. Let $\mathcal{N} = \{1, 2, \dotsc, N\}$ denote the set of antenna elements at each BS. The Cartesian coordinate of the $n$-th antenna element at the $k$-th BS is denoted by $\mathbf{q}_{n,k} = [x_{n,k}, y_{n,k}]^T$, $\forall k \in \mathcal{K}$, $n \in \mathcal{N}$. \textcolor{black}{To enable cooperative operation, the coordinated BSs are interconnected via high-speed backhaul links that facilitate control signaling and timing synchronization \cite{cheng2024optimal,Jungnickel4726071,Nigam6928420}.}
	
		\begin{table*}[t]
		\centering
		\caption{Summary of Notations}
        \label{notations}
        \renewcommand{\arraystretch}{1.5}
        \scalebox{0.9}{%
            \begin{tabular}{|l|l||l|l|}
                \hline
                \textbf{Notation} & \textbf{Description} & \textbf{Notation} & \textbf{Description} \\
                \hline
                $K$ & Number of BSs & $\mathbf{q}_{n,k}=(x_{n,k},y_{n,k})$ & Coordinates of the $n$-th antenna element at BS $k$ \\
                \hline
                $\mathcal{K}=\{1,...,K\}$ & Set of coordinated BSs & $\mathcal{C}=\{c_1,...,c_K\}$ & Set of CUs \\
                \hline
                $\mathcal{S}=\{s_1,...,s_M\}$ & Set of sensing sampling points & $\mathcal{E}=\{e_1,...,e_K\}$ & Set of EUs \\
                \hline
                $d$ & Spacing between adjacent ULA elements & $\mathbf{p}_{c_k}=(x_{c_k},y_{c_k})$ & Coordinates of  CU $c_k$ \\
                \hline
                $\lambda$ & Carrier wavelength & $\mathbf{p}_{e_k}=(x_{e_k},y_{e_k})$ &  Coordinates of  ER $e_k$ \\
                \hline
                $N$ & Number of antennas per ULA & $\mathbf{p}_{s_m}=(x_{s_m},y_{s_m})$ & Coordinates of the sensing sampling point $s_m$ \\
                \hline
                $D$ & Effective aperture of ULA, $D=(N-1)d$ & $\bm{h}_{k,c_l}$ & Channel vector between  CU $c_l$ and  BS $k$ \\
                \hline
                $\kappa$ & Wave number of the carrier& $\bm{h}_{k,e_l}$ & Channel vector between  ER $e_l$ and  BS $k$ \\
                \hline
                $M$ & Number of sampling points in $\mathcal{A}_{\rm s}$ & $\bm{h}_{k,s_m}$&Channel vector between sensing point $s_m$ and  BS $k$\\ 
                \hline
                $\mathcal{A}_{\rm s}$	&Designated sensing area& $\mathcal{A}_{e_k}$&Uncertainty region of ER $e_k$\\
                \hline
            \end{tabular}%
		}
	\end{table*}
	We assume  a general multi-user scenario with multiple CUs and ERs. In particular, we consider a representative transmission block in which each BS simultaneously communicates with one CU and transfers power to one ER\footnote{This setup is adopted primarily for illustration. The proposed analytical framework can be readily extended to scenarios where multiple CUs and ERs are served simultaneously by each BS per transmission block.}, while all CUs and ERs are each equipped with a single omnidirectional antenna \cite{yilongofdm,cheng2024optimal}. Let $\mathcal{C} = \{c_1, c_2, \dotsc, c_K\}$ and $\mathcal{E} = \{e_1, e_2, \dotsc, e_K\}$ denote the sets of CUs and ERs served in the considered transmission block, respectively, where $c_k$ and $e_k$ correspond to the CU and ER associated with BS $k$.  The location of each CU, denoted by $\mathbf{p}_{c_k}=[x_{c_k},y_{c_k}]^T$, is assumed to be perfectly known by the network, with accuracy ensured by the active exchange of pilot and data signals with the BS, which enables precise position estimation through standard localization methods \cite{dwivedi2021positioning}.	In contrast, ERs may be passive energy-harvesting devices that do not actively transmit pilots, and thus their positions may not be precisely estimated. Instead, each ER is assumed to lie within an uncertainty region $\mathcal{A}_{e_k}$, reflecting mobility, coarse localization, and limited feedback \cite{Li9184896}. Accordingly, the actual ER position $\mathbf{p}_{e_k}=[x_{e_k},y_{e_k}]^T$ is modeled as uniformly distributed over $\mathcal{A}_{e_k}$\footnote{\textcolor{black}{It is worth noting that the proposed framework is not restricted to any specific geometry and remains applicable to arbitrary spatial configurations of the ERs' uncertainty regions.}}. \textcolor{black}{It is also noted that both CUs and ERs may change their locations across transmission blocks due to mobility. Accordingly, the proposed framework operates on a per-block basis, where the network topology is regarded as quasi-static within each coherent transmission block.} Furthermore, to facilitate target sensing, a designated area $\mathcal{A}_{\mathrm{s}}$ is discretized into $M$ sampling points, $\mathcal{S}=\{s_1,s_2,\dotsc,s_M\}$, where the coordinate of point $s_m$ is given by $\mathbf{p}_{s_m}=[x_{s_m},y_{s_m}]^T$. Such a discretization approach is widely adopted in radar and ISAC literature as it transforms the continuous area into a finite set of candidate locations, thereby enabling tractable optimization while still capturing the spatial characteristics of the target region \cite{cheng2024optimal,stoica2007probing}. Additionally, in the considered coordinated sensing operation, all BSs transmit their processed sensing information to a central unit via backhaul links for joint target detection, with further details discussed in Section \ref{ensensing}. 
	
	Finally,  all CUs, ERs, and sensing sampling points are assumed to be located within the electromagnetic radiative near-field region of the coordinated BSs. Specifically, for any point of interest, $\mathbf{p}$, within the network coverage, its distance to the $n$-th antenna element of the $k$-th BS satisfies $
		\sqrt[3]{\frac{D^4}{8\lambda}} < \|\mathbf{q}_{n,k}-\mathbf{p}\| < \frac{2D^2}{\lambda}$, 
	where $\lambda$ is the carrier wavelength and $D = (N-1)d$ denotes the effective aperture of the ULA \cite{balanis2016antenna}.
	
	\subsection{Near-field Channel Model}
	In the near-field propagation regime, the distances between CUs, ERs, sensing points, and the BSs' ULAs are comparable to the array aperture. As a result, the conventional planar wavefront assumption becomes inaccurate, as the propagation distances vary across antenna elements, leading to element-dependent phase shifts and path loss. To accurately capture this effect, we adopt the spherical wavefront model, based on which the channel vector between BS \(k\) and any point of interest \(\mathbf{p}\in\mathbb{R}^2\) corresponding to a CU, ER, or sensing target, is expressed as \cite{liu10944643,Ouyang10639537,liu2023near,wang2023near}
	\begin{equation}
		\bm{h}_k(\mathbf{p}) = \bm{\ell}_k(\mathbf{p})\odot\bm{v}_k(\mathbf{p}),
	\end{equation}
	where  $\bm{\ell}_k(\mathbf{p})\in \mathbb{C}^{N\times 1}$ accounts for the free-space path loss, and $\bm{v}_k(\mathbf{p})\in\mathbb{C}^{N\times 1}$ is the steering vector of the  ULA at BS \(k\) toward \(\mathbf{p}\), which are given by 
	\begin{align}
		\bm{\ell}_k(\mathbf{p})
		=&\frac{1}{2\kappa}\left[
		\frac{1}{\|\mathbf{q}_{1,k} - \mathbf{p}\|},
		\frac{1}{ \|\mathbf{q}_{2,k} - \mathbf{p}\|},
		\dotsc,
		\frac{1}{ \|\mathbf{q}_{N,k} - \mathbf{p}\|}
		\right]^T\hspace{-3mm},\nonumber
	\end{align}
	and
	\begin{align}
		\bm{v}_{k}(\mathbf{p})
		=&
		\Bigl[
		e^{-j\kappa\|\mathbf{q}_{1,k}-\mathbf{p}\|},
		e^{-j\kappa\|\mathbf{q}_{2,k}-\mathbf{p}\|},
		\ldots,
		e^{-j\kappa\|\mathbf{q}_{N,k}-\mathbf{p}\|}
		\Bigr]^{T},\nonumber
	\end{align}
	respectively, with \(\kappa = 2\pi/\lambda\) and $j=\sqrt{-1}$. It is worth emphasizing that the near-field region in numerous practical wireless scenarios is often characterized by a strong line-of-sight (LoS) component, owing to favorable deployment conditions such as short link distances, open environments, and unobstructed propagation paths~\cite{liu10944643,Ouyang10639537,liu2023near,wang2023near,ZhaoNear2024,liu10716601}. \textcolor{black}{In such environments, scattering is generally weak and multi-path effects are negligible \cite{near10664591,near10841363,near10988518}. Hence, the adopted channel model provides a reliable characterization of near-field ISCAP systems, while the proposed framework remains applicable to more complex multi-scattering propagation conditions. Moreover, for communication links, the channels are assumed to be accurately estimated at the BSs through pilot-based channel estimation within each transmission block, whereas the BS-ER channels are characterized statistically according to their spatial uncertainty regions, as detailed in Section~\ref{wptsection}.}

	For notational clarity, we define the following channel vectors from BS $k\in\mathcal{K}$ to the CUs, ERs, and sensing targets. Specifically,  $\bm{h}_{k,c_l} = \bm{h}_{k}(\mathbf{p}_{c_l})$ denotes the channel vector to  CU $c_l\in\mathcal{C}$; $\bm{h}_{k,e_l} = \bm{h}_{k}(\mathbf{p}_{e_l})$ denotes the channel vector to  ER $e_l\in\mathcal{E}$; and $\bm{h}_{k,s_m} = \bm{h}_{k}(\mathbf{p}_{ s_m})$ denotes the channel vector to  sensing sampling point $s_m\in\mathcal{S}$.

	\subsection{Transmit Signal Model}
	We consider each transmission block consisting of \( L \) symbols, indexed by \( \tau \in \{1, 2, \dotsc, L\} \). At each symbol, a BS transmits an aggregated signal composed of an information-bearing component and a dual-purpose signal adopted for both effective WPT and environmental sensing. In particular, the transmitted signal  by  BS $k$ at symbol \( \tau \) is thus given by
	\begin{equation}
		\bm{s}_k[\tau] = \bm{\omega}_{k}s_{c_k}[\tau] + \bm{s}_{\mathrm{o},k}[\tau],
	\end{equation}
	where \( s_{c_k}[\tau] \in \mathbb{C} \) denotes the zero-mean, unit-variance information symbol intended for  CU $c_k\in\mathcal{C}$; \( \bm{\omega}_k \in \mathbb{C}^{N \times 1} \) is the corresponding beamforming vector; and \( \bm{s}_{\mathrm{o},k}[\tau] \in \mathbb{C}^{N \times 1} \) represents the dual-purpose signal for wireless powering and sensing. In practice, the dual-purpose signal can be \textit{a priori} generated pseudo-random sequences that is modeled as a zero-mean, wide-sense stationary random process with  time-invariant covariance matrix $\bm{R}_{\mathrm{o},k} = \mathbb{E}\{\bm{s}_{\mathrm{o},k}[\tau] \bm{s}_{\mathrm{o},k}^H[\tau]\} \succeq \bm{0}$. Note that both \( \{\bm{\omega}_k\} \) and \( \{\bm{R}_{\mathrm{o},k}\} \) are design variables to be jointly optimized across all coordinated BSs to enhance the performance of the considered ISCAP system. Furthermore, we assume that \( s_{c_k}[\tau] \) is uncorrelated with \( \bm{s}_{\mathrm{o},k}[\tau] \), and the transmitted signals from different BSs, i.e., \( \bm{s}_k[\tau] \) and \( \bm{s}_l[\tau] \), are mutually independent, \(\forall\, k \ne l \). Additionally, the block length \( L \) is assumed to be sufficiently large to enable accurate estimation of the relevant signal statistics \cite{hua2023optimal}. Finally, we  assume that each BS is subject to a block-average transmit power constraint, i.e.,	
	\begin{equation}
		\|\bm{\omega}_k\|^2 + \operatorname{Tr}(\bm{R}_{\mathrm{o},k}) \leq P_{\max},\quad\forall k\in\mathcal{K},
	\end{equation}
	where \( P_{\max} \) denotes the maximum transmit power budget at each BS.

	\section{ISCAP Functionality and Problem Formulation}\label{iscapfunctions}
	In this section, we characterize the core functionalities of the proposed near-field multi-cell ISCAP system, define the corresponding performance metrics, and formulate the associated joint optimization problem.

	\subsection{Information Transfer}\label{cutype}
	The SINR  at each CU is adopted as the key performance metric for downlink communication. Specifically, the signal received at CU $c_k$, denoted by \( y_{c_k}[\tau] \in \mathbb{C} \), is expressed as
	\begin{align}
		y_{k}[\tau] &= \underbrace{\bm{h}_{k,c_k}^H \bm{\omega}_{k} s_{c_k}[\tau]}_{\text{desired signal}} 
		+ \underbrace{\bm{h}_{k,c_k}^H \bm{s}_{\mathrm{o},k}[\tau]}_{\text{intra-cell interference}}  \nonumber\\&+ \underbrace{\sum\nolimits_{l\in\mathcal{K},\,l \ne k} \bm{h}_{l,c_k}^H \left( \bm{\omega}_{l} s_{\mathrm{c},l}[\tau] + \bm{s}_{\mathrm{o},l}[\tau] \right)}_{\text{inter-cell interference}} 
		+ z_{c_k}[\tau],
	\end{align}
	where \( z_{c_k}[\tau] \sim \mathcal{CN}(0, \sigma_{\mathrm{c}}^2) \) denotes the additive white Gaussian noise, and $ \sigma_{\mathrm{c}}^2$ is the noise power at CU $c_k$. To capture the impact of practical receiver-side processing on system performance, we classify CUs into three types according to their interference cancellation capabilities with respect to pseudo-random dual-purpose signals generated a priori, as follows:
	\begin{itemize}
		\item \textbf{Type-I CUs} represent low-complexity and legacy receivers (e.g., IoT devices), which do not implement interference cancellation. Consequently, they are subject to both intra-cell and inter-cell interference.
		\item \textbf{Type-II CUs} correspond to more capable receivers equipped with successive interference cancellation (SIC) \cite{patel2002analysis}. These users can cancel intra-cell dual-purpose interference, i.e., $\bm{h}_{k,c_k}^H \bm{R}_{\mathrm{o},k} \bm{h}_{k,c_k}$, but still treat inter-cell dual-purpose interference, i.e., $\sum\nolimits_{l\in\mathcal{K},\,l \ne k} \bm{h}_{l,c_k}^H \bm{R}_{\mathrm{o},l} \bm{h}_{l,c_k}$ as noise.
		\item \textbf{Type-III CUs} denote advanced receivers capable of mitigating all dual-purpose interference, including both intra-cell and inter-cell components, i.e., $\sum\nolimits_{l\in\mathcal{K}} \bm{h}_{l,c_k}^H \bm{R}_{\mathrm{o},l} \bm{h}_{l,c_k}$ \cite{cheng2024optimal}.
	\end{itemize} 
	Based on this classification, the SINR expressions are given as $\mathrm{SINR}_{c_k}^{\mathrm{(I)}}(\{\bm{\omega}_{k},\bm{R}_{\mathrm{o},k}\})$ in~\eqref{sinr1}, $\mathrm{SINR}_{c_k}^{\mathrm{(II)}}(\{\bm{\omega}_{k},\bm{R}_{\mathrm{o},k}\})$ in~\eqref{sinr2}, and $\mathrm{SINR}_{c_k}^{\mathrm{(III)}}(\{\bm{\omega}_{k}\})$ in~\eqref{sinr3} for Type-I, Type-II, and Type-III CUs, respectively, at the top of next page.
    	\begin{figure*}[t]
        \begin{align}
            &\hspace{-10mm}\mathrm{SINR}_{c_k}^{\mathrm{(I)}}(\{\bm{\omega}_{k},\bm{R}_{\mathrm{o},k}\}) = \frac{\big|\bm{h}_{k,c_k}^H \bm{\omega}_{k}\big|^2}
            {\bm{h}_{k,c_k}^H \bm{R}_{\mathrm{o},k} \bm{h}_{k,c_k} + \sum\nolimits_{l\in\mathcal{K},\,l \ne k} \big|\bm{h}_{l,c_k}^H \bm{\omega}_{l}\big|^2 +\sum\nolimits_{l\in\mathcal{K},\,l \ne k} \bm{h}_{l,c_k}^H \bm{R}_{\mathrm{o},l} \bm{h}_{l,c_k} + \sigma_{\mathrm{c}}^2}\label{sinr1}.  \\
            &\hspace{-10mm}\mathrm{SINR}_{c_k}^{\mathrm{(II)}}(\{\bm{\omega}_{k},\bm{R}_{\mathrm{o},k}\}) = \frac{\big|\bm{h}_{k,c_k}^H \bm{\omega}_{k}\big|^2}
            { \sum\nolimits_{l\in\mathcal{K},\,l \ne k} \big|\bm{h}_{l,c_k}^H \bm{\omega}_{l}\big|^2 +\sum\nolimits_{l\in\mathcal{K},\,l \ne k} \bm{h}_{l,c_k}^H \bm{R}_{\mathrm{o},l} \bm{h}_{l,c_k} + \sigma_{\mathrm{c}}^2}\label{sinr2}. \\
            &\hspace{-10mm}\mathrm{SINR}_{c_k}^{\mathrm{(III)}}(\{\bm{\omega}_{k}\}) = \frac{\big|\bm{h}_{k,c_k}^H \bm{\omega}_{k}\big|^2}
            { \sum\nolimits_{l\in\mathcal{K},\,l \ne k} \big|\bm{h}_{l,c_k}^H \bm{\omega}_{l}\big|^2  + \sigma_{\mathrm{c}}^2}\label{sinr3}. 
        \end{align}
        \hrule
    \end{figure*}
	\subsection{Power Transfer}\label{wptsection}
	For ease of exposition, we adopt the typical linear energy-harvesting model \cite{ng2013wireless,clerckx2018fundamentals}, while noting that the proposed framework can be readily extended to nonlinear energy-harvesting characteristics. In particular, the harvested direct‑current (DC) power is assumed to be proportional to the total received RF signal power. More specifically, for ER \( e_k \) located at position \( \mathbf{p}_{e_k} \in \mathcal{A}_{e_k} \), the harvested power during the considered transmission block is given by
	\begin{align}
	\hspace{-3mm}	\mathcal{P}_{e_k}(\{\bm{\omega}_{l},\bm{R}_{\mathrm{o},l}\}) 
		&= \sum\nolimits_{l\in\mathcal{K}} \eta\, \bm{h}_{l,e_k}^H \left( \bm{\omega}_{l} \bm{\omega}_{l}^H + \bm{R}_{\mathrm{o},l} \right) \bm{h}_{l,e_k},
	\end{align}
	where \( \eta \in (0,1] \) denotes the RF-to-DC conversion efficiency.

	Furthermore, to account for location uncertainty of ERs in a realistic manner, we introduce the \emph{spatially averaged harvested power} as practically meaningful performance metric. In contrast to conventional worst‑case robust optimization \cite{Xiang6211384}, which guarantees performance  for the most adverse possible ER location and therefore often leads to overly conservative designs, the proposed metric here measures the average harvested energy across all possible ER positions within the uncertainty region. This captures the expected performance of a mobile ER whose exact location cannot be precisely determined, offering a realistic characterization of typical system behavior while avoiding the excessive conservatism of worst‑case robust optimization. For simplicity, we refer to this spatially averaged harvested power as the \emph{average harvested power} throughout the remainder of this paper, which is formally defined as
	\begin{equation}
		\bar{\mathcal{P}}_{e_k}(\{\bm{\omega}_{l},\bm{R}_{\mathrm{o},l}\})  = \mathbb{E}_{\mathbf{p}_{e_k} \in \mathcal{A}_{e_k}} \big\{ \mathcal{P}_{\mathbf{p}_{e_k}}(\{\bm{\omega}_{l},\bm{R}_{\mathrm{o},l}\}) \big\}.
	\end{equation}
	The analytical expression for \(\bar{\mathcal{P}}_{e_k}(\{\bm{\omega}_{l},\bm{R}_{\mathrm{o},l}\})  \) is provided in the following lemma.

	\begin{lemma}
		\label{average_power} The average harvested power by ER \(e_k\) is given by \begin{equation} \hspace{-3mm}\bar{\mathcal{P}}_{e_k}(\{\bm{\omega}_{l},\bm{R}_{\mathrm{o},l}\})= \eta\sum\nolimits_{l\in\mathcal{K}} \Tr\left(\bm{G}_{l,e_k}\big(\bm{\omega}_{l}\bm{\omega}_{l}^H+\bm{R}_{\mathrm{o},l}\big)\right), \end{equation} where the \((\alpha,\beta)\)-th entry of \(\bm{G}_{l,e_k}\in\mathbb{C}^{N\times N}\) is given by \eqref{channelcov}\footnote{It is worth noting that deriving a closed-form expression for \eqref{channelcov} is generally intractable, owing to the complexity of the underlying integral, especially for arbitrary regions \( \mathcal{A}_{e_k} \). Nevertheless, the integral can be efficiently evaluated using numerical tools, such as \textit{Mathematica}\cite{Mathematica}.} and $f_{e_k}(\mathbf{p})$=$1/{|\mathcal{A}_{e_k}|}$. 
	\end{lemma}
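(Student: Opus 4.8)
The plan is to evaluate the expectation in the definition of $\bar{\mathcal{P}}_{e_k}$ by exploiting the fact that the harvested power is a \emph{quadratic form} in the random channel vector $\bm{h}_{l,e_k}=\bm{h}_l(\mathbf{p}_{e_k})$, whereas the beamforming quantities $\bm{\omega}_l\bm{\omega}_l^H+\bm{R}_{\mathrm{o},l}$ are deterministic design variables. The only source of randomness is the ER position $\mathbf{p}_{e_k}$, which is uniformly distributed over $\mathcal{A}_{e_k}$ with density $f_{e_k}(\mathbf{p})=1/|\mathcal{A}_{e_k}|$. Hence the task reduces to pushing the expectation through the finite sum over $l\in\mathcal{K}$ and isolating it onto the channel-dependent outer product.

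First I would rewrite each summand of $\mathcal{P}_{e_k}$ using the scalar-trace identity $\bm{h}_{l,e_k}^H\bm{A}_l\bm{h}_{l,e_k}=\Tr(\bm{A}_l\,\bm{h}_{l,e_k}\bm{h}_{l,e_k}^H)$, where $\bm{A}_l=\bm{\omega}_l\bm{\omega}_l^H+\bm{R}_{\mathrm{o},l}$. Applying $\mathbb{E}_{\mathbf{p}_{e_k}}\{\cdot\}$ and invoking linearity of expectation over the finite index set $\mathcal{K}$ together with the linearity of the trace, the deterministic matrix $\bm{A}_l$ factors outside the expectation, giving $\mathbb{E}\{\Tr(\bm{A}_l\,\bm{h}_{l,e_k}\bm{h}_{l,e_k}^H)\}=\Tr(\bm{A}_l\,\mathbb{E}\{\bm{h}_{l,e_k}\bm{h}_{l,e_k}^H\})$. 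This motivates defining the spatial channel correlation matrix $\bm{G}_{l,e_k}=\mathbb{E}_{\mathbf{p}_{e_k}}\{\bm{h}_{l,e_k}\bm{h}_{l,e_k}^H\}=\int_{\mathcal{A}_{e_k}}\bm{h}_l(\mathbf{p})\bm{h}_l(\mathbf{p})^H f_{e_k}(\mathbf{p})\,d\mathbf{p}$, which is deterministic and independent of the optimization variables, and immediately yields the claimed form $\bar{\mathcal{P}}_{e_k}=\eta\sum_{l\in\mathcal{K}}\Tr(\bm{G}_{l,e_k}\bm{A}_l)$.

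The remaining step is to specialize the $(\alpha,\beta)$-th entry of $\bm{G}_{l,e_k}$. Substituting the near-field channel model $[\bm{h}_l(\mathbf{p})]_n=\frac{1}{2\kappa\|\mathbf{q}_{n,l}-\mathbf{p}\|}e^{-j\kappa\|\mathbf{q}_{n,l}-\mathbf{p}\|}$ into $[\bm{h}_l(\mathbf{p})]_\alpha[\bm{h}_l(\mathbf{p})]_\beta^\ast$ and integrating against the uniform density produces $[\bm{G}_{l,e_k}]_{\alpha,\beta}=\frac{1}{4\kappa^2|\mathcal{A}_{e_k}|}\int_{\mathcal{A}_{e_k}}\frac{e^{-j\kappa(\|\mathbf{q}_{\alpha,l}-\mathbf{p}\|-\|\mathbf{q}_{\beta,l}-\mathbf{p}\|)}}{\|\mathbf{q}_{\alpha,l}-\mathbf{p}\|\,\|\mathbf{q}_{\beta,l}-\mathbf{p}\|}\,d\mathbf{p}$, which is exactly \eqref{channelcov}. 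I do not expect a genuine analytical obstacle, since the argument is a mechanical application of linearity and the trace identity; the only non-elementary object is the scalar integral defining each entry of $\bm{G}_{l,e_k}$, whose intractability for general $\mathcal{A}_{e_k}$ is precisely why the statement leaves it in integral form (to be evaluated numerically). The single point deserving a line of justification is the interchange of expectation and trace, which is immediate because the trace is a finite linear combination of the entries of its argument and the sum over $\mathcal{K}$ is finite, so no convergence or dominated-convergence argument is required.
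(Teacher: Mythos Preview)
Your proposal is correct and follows exactly the same route as the paper's own proof: apply the trace identity $\bm{h}^H\bm{A}\bm{h}=\Tr(\bm{A}\bm{h}\bm{h}^H)$, use linearity of expectation and of the trace to pull the deterministic matrix $\bm{A}_l=\bm{\omega}_l\bm{\omega}_l^H+\bm{R}_{\mathrm{o},l}$ outside, and define $\bm{G}_{l,e_k}=\mathbb{E}\{\bm{h}_{l,e_k}\bm{h}_{l,e_k}^H\}$ with the entrywise integral \eqref{channelcov}. Your write-up is in fact more detailed than the paper's one-line proof sketch, and the only cosmetic difference (your $\Tr(\bm{A}_l\bm{G}_{l,e_k})$ versus the statement's $\Tr(\bm{G}_{l,e_k}\bm{A}_l)$) is immaterial by the cyclic property of the trace.
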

	\begin{figure*}[!t]
	\begin{equation}\label{channelcov}
        G_{l,e_k}[\alpha,\beta]
        = \int_{\mathcal{A}_{e_k}\subset\mathbb{R}^2}
        \frac{1}{4\kappa^2\|\mathbf{q}_{\alpha,l}-\mathbf{p}\|\,\|\mathbf{q}_{\beta,l}-\mathbf{p}\|}
        \exp\!\left(-j\kappa\!\left(\|\mathbf{q}_{\alpha,l}-\mathbf{p}\|-\|\mathbf{q}_{\beta,l}-\mathbf{p}\|\right)\right)
         f_{e_k}(\mathbf{p})\mathrm{d}\mathbf{p}.
    \end{equation}
		\hrule
	\end{figure*}
	\begin{proof}
		The proof follows directly by expressing the harvested power as $\mathbb{E}_{\mathbf{p}_{e_k}\in\mathcal{A}_{e_k}}\{\cdot\}$, applying the trace identity for scalar quadratic forms, and defining $\bm{G}_{l,e_k} = \mathbb{E}\{\bm{h}_{l,e_k}\bm{h}_{l,e_k}^H\}$ with entries given in~\eqref{channelcov}, at the top of next page.  
	\end{proof}
     \textcolor{black}{It is worth noting that the proposed framework is general and remains applicable to arbitrary ER spatial distributions and region geometries. For instance, if the ER positions follow a Gaussian distribution centered at $\bm{\mu}{e_k}$ with covariance matrix $\bm{\Sigma}_{e_k}$, then $f_{e_k}(\mathbf{p})$ takes the form of a standard 2D Gaussian density, given by $f_{e_k}(\mathbf{p})
     = \frac{1}{2\pi \sqrt{\det(\bm{\Sigma}_{e_k})}}
     \exp\!\left(-\tfrac{1}{2}(\mathbf{p}-\bm{\mu}_{e_k})^{\!T}\bm{\Sigma}_{e_k}^{-1}(\mathbf{p}-\bm{\mu}_{e_k})\right).
     $}
    
	\subsection{Target Detection}\label{ensensing}
	The considered ISCAP paradigm enables environmental sensing through a distributed MIMO radar architecture, in which spatially separated BSs cooperatively detect targets within the common sensing region defined in Section \ref{netlabel}. Compared with conventional co-located radar systems, the distributed configuration inherently offers enhanced spatial diversity and aligns naturally with multi-cell network deployments \cite{zhang9540344,meng2024cooperative}. However, fully exploiting distributed MIMO radar typically requires tight synchronization and high-capacity backhaul links to support the exchange of received raw data among BSs \cite{han2025network}. To address this challenge, we adopt a signal-level fusion strategy, where each BS locally extracts compact detection statistics from its received echoes and forwards them to a central unit for joint target detection \cite{cheng2024optimal}.

	Moreover, we consider a monostatic configuration at each BS, where both transmission and reception are performed leveraging the same antenna array. Hence, the echo signal corresponding to sampling point \(s_m\in\mathcal{S}\) and received by BS $k$ at symbol instance $\tau$ is given by
	\begin{align}\label{yeq}
		\bm{y}_{m,k}[\tau]
		=&\underbrace{\zeta \bm{h}_{k,s_m} \bm{h}_{k,s_m}^T \bm{s}_k[\tau]}_{\text{Direct-link echo signal}} + \underbrace{\sum\nolimits_{l\in\mathcal{K},\,l \ne k} \zeta \bm{h}_{k,s_m} \bm{h}_{l,s_m}^T \bm{s}_l[\tau]}_{\text{Cross-link echo signals}} \nonumber\\
		&+ \bm{z}_{\mathrm{s},k}[\tau],
	\end{align}
	where $\zeta$ denotes the radar cross-section (RCS) and is assumed to be identical across all sampling points in $\mathcal{A}_{\rm s}$;  \( \bm{z}_{\mathrm{s},k}[\tau] \sim \mathcal{CN}(\bm{0}, \sigma_{\mathrm{s}}^2 \bm{I}) \) denotes the additive white Gaussian noise with power $\sigma_{\mathrm{s}}^2 $. The first term on the right-hand side of \eqref{yeq} corresponds to the direct reflection from the target back to its originating BS, and the second term represents the cross-link echo signals transmitted by other BSs, reflected by the target, and subsequently received by BS $k$. In the considered ISCAP design, both the information-bearing communication signal and the dual‑purpose signal contribute to the direct-link echoes; hence, reflections of communication transmissions from the target are also exploited to enhance the sensing power. Note that the direct BS-to-BS leakage is assumed to be fully known and can be digitally removed prior to sensing, and thus it is excluded from the received-signal model \cite{cheng2024optimal}. Once the echo signals are received, each BS adopts a matched filter to maximize the signal-to-noise ratio (SNR) by correlating the incoming signals with a replica of the transmitted waveform. Accordingly, each BS processes only the direct-link echoes (i.e., BS-to-target-to-originating BS) through matched filtering, while discarding  cross-link reflections from other BSs \cite{cheng2024optimal}. The resulting match-filtered outputs are then forwarded to a central unit, which performs joint detection by fusing the aggregated observations from all BSs across the sampled spatial locations.
	
	To evaluate the sensing performance at each sampled spatial location, we adopt an energy-based detection criterion \cite{liu2022integrated}. Specifically, the presence of a target is determined by comparing the received echo power against a predefined threshold, which is determined by the desired false alarm probability, denoted by \( P_{\mathrm{FA}} \). Under this criterion, the detection probability corresponding to point \( s_m \), denoted by \( P_{{\rm D},m} \), is given by \cite{cheng2024optimal}
	\begin{equation}\label{detectionprob}
		{P}_{{\rm D},m}={Q}\left({Q}^{-1}\big({P}_{\rm FA}\big)-\sqrt{\frac{2 {\varphi}_{m}}{\sigma_{\mathrm{s}}^2}}\right),
	\end{equation}
	where ${\varphi}_{m}$ is the total received echo power contributed by the direct links at all coordinated BSs, as given in \eqref{echopower} at the top of the next page, and ${Q}(\cdot)$ denotes the ${Q}$-function, i.e., $
	{Q}(x) = \frac{1}{\sqrt{2\pi}} \int_x^{\infty} \exp\left( -\frac{t^2}{2} \right) \mathrm{d}t$.
	To characterize the sensing performance over the designated region, we define the worst-case detection probability across all sampled spatial locations as $
		P_{\mathrm{D}} = \min_{s_m \in \mathcal{S}} P_{{\rm D},m}$, 
	which represents the lowest probability of successful target detection and thus serves as a robust metric for network-level sensing evaluation \cite{cheng2024optimal,liu2022integrated}.
	
	It is observed from \eqref{detectionprob} that the detection probability \( P_{{\rm D},m} \) increases monotonically with the aggregated received echo power \( \varphi_m(\{\bm{\omega}_{k},\bm{R}_{\mathrm{o},k}\}) \). Consequently, maximizing the worst-case detection probability is equivalent to maximizing the minimum received echo power across all sampled points, i.e., \( \min_{s_m \in \mathcal{S}} \varphi_m(\{\bm{\omega}_{k},\bm{R}_{\mathrm{o},k}\}) \). This quantity is thus adopted as the sensing design metric, while the resulting worst-case detection probability \( P_{\mathrm{D}} \) is illustrated in the numerical results.
	\begin{figure*}[t]
		\begin{equation}\label{echopower}
			\varphi_{m}(\{\bm{\omega}_{k},\bm{R}_{\mathrm{o},k}\})
			=  |\zeta|^2\sum\nolimits_{k=1}^{K} \operatorname{Tr} \big(\bm{h}_{k,s_m}^* \bm{h}_{k,s_m}^T (\bm{\omega}_{k} \bm{\omega}_{k}^H + \bm{R}_{\mathrm{o},k} )  \big)\sum\nolimits_{n=1}^{N} \frac{\lambda^2}{16 \pi^2  \|\mathbf{q}_{n,k} - \mathbf{p}_{s_m}\|^2}.
		\end{equation}
		\hrule
	\end{figure*}
	\vspace{-6mm}
	\subsection{Problem Formulation}
	We now introduce a unified optimization framework, where the key design variables are the information beamforming vectors, i.e., \( \{ \bm{\omega}_{k} \} \), and the covariance matrices of the dual-purpose signals, i.e., \( \{ \bm{R}_{\mathrm{o},k} \} \). More specifically, the  objective is to maximize the worst-case sensing performance, quantified by the minimum aggregate echo signal power across all sampled points, subject to the following practical constraints: (i) the SINR at each CU must exceed a prescribed threshold; (ii) each ER must harvest at least the required average power under location uncertainty; (iii) the transmit power of each BS remains within its assigned budget. Based on the SINR models introduced in Section~\ref{cutype}, we formulate three separate optimization problems corresponding to Type-I, Type-II and Type-III CUs, respectively, as detailed below:
	\begin{subequations} \label{P1}
		\begin{align}
			\hspace{-5mm}(\text{P1}): \quad
			\max_{\{ \bm{\omega}_{k}, \bm{R}_{\mathrm{o},k} \}} \quad & \min_{s_m\in\mathcal{S}}\quad \varphi_{m}(\{ \bm{\omega}_{k}, \bm{R}_{\mathrm{o},k} \}) \\
			\text{s.t.}\hspace{5mm} \quad 
			&	\hspace{-5mm}\mathrm{SINR}^{\mathrm{(I)}}_{c_k}(\{ \bm{\omega}_{k}, \bm{R}_{\mathrm{o},k} \}) \ge \Gamma_{c_k},  \forall c_k \in \mathcal{C}, \label{P1a}  \\[2pt]
			& \bar{\mathcal{P}}_{e_k}(\{ \bm{\omega}_{k}, \bm{R}_{\mathrm{o},k} \}) \geq \Omega_{e_k},  \forall e_k \in \mathcal{E}, \label{P1b} \\[2pt]
			& \|\bm{\omega}_{k}\|^2 + \mathrm{Tr}(\bm{R}_{\mathrm{o},k}) \le P_{\max}, \forall k \in \mathcal{K}, \label{P1c} \\[2pt]
			& \bm{R}_{\mathrm{o},k} \succeq \bm{0}, \quad \forall k \in \mathcal{K}, \label{P1d}
		\end{align}
	\end{subequations}
	\vspace{-5mm}
	\begin{subequations} \label{P2}
		\begin{align}
			\hspace{-4mm} (\text{P2}): \quad
			\max_{\{ \mathrm{\omega}_{k}, \bm{R}_{\mathrm{o},k} \}} \quad & \min_{s_m\in\mathcal{S}}\quad \varphi_{m}(\{ \bm{\omega}_{k}, \bm{R}_{\mathrm{o},k} \}) \\[2pt]
			\text{s.t.}\hspace{5mm} \quad 
			&\hspace{-5mm} \mathrm{SINR}^{\mathrm{(II)}}_{c_k}(\{ \bm{\omega}_{k}, \bm{R}_{\mathrm{o},k} \}) \ge \Gamma_{c_k},  \forall c_k \in \mathcal{C}, \label{P2a} \\[2pt]
			& \eqref{P1b} - \eqref{P1d}, \nonumber
		\end{align}
	\end{subequations}
	\vspace{-5mm}
	\begin{subequations} \label{P3}
		\begin{align}
			\hspace{-4mm}  (\text{P3}): \quad
			\max_{\{ \mathrm{\omega}_{k}, \bm{R}_{\mathrm{o},k} \}} \quad & \min_{s_m\in\mathcal{S}}\quad \varphi_{m}(\{ \bm{\omega}_{k}, \bm{R}_{\mathrm{o},k} \}) \\[2pt]
			\text{s.t.} \quad 
			& \mathrm{SINR}^{\mathrm{(III)}}_{c_k}(\{ \bm{\omega}_{k} \}) \ge \Gamma_{c_k},  \forall c_k \in \mathcal{C},  \\[2pt]
			& \eqref{P1b} - \eqref{P1d}, \nonumber
		\end{align}
	\end{subequations}
	where \( \Gamma_{c_k} \) and \( \Omega_{e_k} \) represent the SINR and energy harvesting thresholds for CU \( c_k \) and ER \( e_k \), respectively. 
	
	It is important to emphasize that the above optimization problems are inherently non-convex and challenging to solve in their original forms. This complexity arises primarily from the SINR constraints, which involve fractional quadratic expressions in the beamforming vectors. Consequently, conventional convex optimization techniques are not directly applicable, necessitating the development of specialized solution methods.
	\section{Optimal and Suboptimal Coordinated Beamforming Designs}\label{optimalsolu}
	In this section, we propose two effective solution approaches to address the formulated optimization problems $(\mathrm{P1})-(\mathrm{P3})$. Specifically, we first develop an optimal beamforming strategy leveraging the SDR technique, which transforms the original problems into tractable convex programs. To further enhance the scalability in large-scale deployments, we then introduce a low-complexity MRT-based sub-optimal scheme. Finally, we provide a comparative computational complexity of both methods to offer practical insights for system implementation.
	\vspace{-9mm}
	\subsection{Optimal SDR-Based Beamforming Design}
	In the following, we present the optimal beamforming design capitalizing on the SDR technique. To facilitate reformulation, we introduce an auxiliary variable \(\Theta\), and define \( \bm{\mathcal{W}}_k = \bm{\omega}_k \bm{\omega}_k^H \succeq \bm{0} \), $\forall$\( k \in \mathcal{K} \), as well as \( \bm{\mathcal{H}}_{l,s_m} = \bm{h}_{l,s_m}^* \bm{h}_{l,s_m}^T \). Problems $(\mathrm{P1})$, $(\mathrm{P2})$, and $(\mathrm{P3})$ are then equivalently reformulated as follows:
	\begin{subequations}
		\begin{align}
			\hspace{-2mm}(\text{P1.1}): \quad\hspace{-5mm}
			\max_{\{ \bm{\mathcal{W}}_k, \bm{R}_{\mathrm{o},k}\}, \Theta}& \quad  \Theta  \\
			\text{s.t.}\quad \quad\quad\quad& \hspace{-12mm}
			\sum\nolimits_{l\in\mathcal{K}}\sum\nolimits_{n=1}^{N}\frac{\lambda^2|\zeta|^2}{16 \pi^2  \|\mathbf{q}_{n,l} - \mathbf{p}_{s_m}\|^2} \nonumber\\
			&\hspace{-12mm}\Tr \left( \bm{\mathcal{H}}_{l,s_m} (\bm{\mathcal{W}}_l + \bm{R}_{\mathrm{o},l} )  \right) \ge\ {} \Theta,  \forall s_m\in\mathcal{S}, \label{p11b} \\
			&\hspace{-12mm}\sum\nolimits_{l\in\mathcal{K}} \mathrm{Tr}(\bm{h}_{l,c_k} \bm{h}_{l,c_k}^H \bm{\mathcal{W}}_l) \nonumber\\
			&\hspace{-12mm}+  \sum\nolimits_{l\in\mathcal{K}}\mathrm{Tr}(\bm{h}_{l,c_k} \bm{h}_{l,c_k}^H \bm{R}_{\mathrm{o},l})+ \sigma_{\mathrm{c}}^2 \nonumber\\
			&\hspace{-12mm}\le\frac{1+\Gamma_{c_k}}{\Gamma_{c_k}}\mathrm{Tr}(\bm{h}_{k,c_k} \bm{h}_{k,c_k}^H \bm{\mathcal{W}}_k),\forall c_k \in\mathcal{C},  \\
			&\hspace{-13mm}\eta\sum\nolimits_{l\in\mathcal{K}} \Tr\left(\bm{G}_{l,e_k}\big(\bm{\mathcal{W}}_{l}+\bm{R}_{\mathrm{o},l}\big)\!\right)\!\!\ge\Omega_{e_k},\forall e_k\in\mathcal{E},\label{p11d}\\
			&\hspace{-12mm}\mathrm{Tr}(\bm{\mathcal{W}}_{k}+\bm{R}_{\mathrm{o},k}) \le P_{\max},\forall k\in\mathcal{K},\label{p11e}  \\
			&\hspace{-12mm}\bm{R}_{\mathrm{o},k} \succeq \bm{0}, \hspace{6mm} \forall k\in\mathcal{K},\label{p11f}\\
			&\hspace{-12mm}\bm{\mathcal{W}}_k \succeq \bm{0},  \hspace{7mm} \forall k\in\mathcal{K},\label{p11g}\\
			&\hspace{-12mm}\rank(\bm{\mathcal{W}}_k)\le1, \quad\forall k\in\mathcal{K},\label{p11h}
		\end{align}
	\end{subequations}
	\vspace{-5mm}
	\begin{subequations}
		\begin{align}
			\hspace{-3mm} (\text{P2.1}): \quad\hspace{-3mm}
			\max_{\{ \bm{\mathcal{W}}_k, \bm{R}_{\mathrm{o},k}\}, \Theta}& \quad  \Theta  \\
			\text{s.t.} \quad\quad\quad\quad&\hspace{-12mm}\sum\nolimits_{l\in\mathcal{K} } \mathrm{Tr}(\bm{h}_{l,c_k} \bm{h}_{l,c_k}^H \bm{\mathcal{W}}_l) \nonumber\\
			&\hspace{-12mm}+ \sum\nolimits_{l \in\mathcal{K},l\ne k }\mathrm{Tr}(\bm{h}_{l,c_k} \bm{h}_{l,c_k}^H \bm{R}_{\mathrm{o},l})+ \sigma_{\mathrm{c}}^2 \nonumber\\
			&\hspace{-12mm}\le \frac{1+\Gamma_{c_k}}{\Gamma_{c_k}}\mathrm{Tr}(\bm{h}_{k,c_k} \bm{h}_{k,c_k}^H \bm{\mathcal{W}}_k),\forall c_k \in\mathcal{C},  \\
			&\hspace{-12mm}\eqref{p11b},\ {} \eqref{p11d}-\eqref{p11h}\nonumber,
		\end{align}
	\end{subequations}
	\vspace{-5mm}
	\begin{subequations}
		\begin{align}
			\hspace{-3mm}(\text{P3.1}): \quad\hspace{-3mm}
			\max_{\{ \bm{\mathcal{W}}_k, \bm{R}_{\mathrm{o},k}\}, \Theta}& \quad  \Theta  \\
			\text{s.t.} \quad\quad\quad\quad&\hspace{-12mm}\sum\nolimits_{l\in\mathcal{K} } \mathrm{Tr}(\bm{h}_{l,c_k} \bm{h}_{l,c_k}^H \bm{\mathcal{W}}_l) + \sigma_{\mathrm{c}}^2 \nonumber\\
			&\hspace{-12mm}\le \frac{1+\Gamma_{c_k}}{\Gamma_{c_k}}\mathrm{Tr}(\bm{h}_{k,c_k} \bm{h}_{k,c_k}^H \bm{\mathcal{W}}_k), \forall c_k \in\mathcal{C},  \\
			&\hspace{-12mm}\eqref{p11b},\ {} \eqref{p11d}-\eqref{p11h}\nonumber,
		\end{align}
	\end{subequations}
	respectively. 	Note that problems $(\mathrm{P1.1})$, $(\mathrm{P2.1})$, and $(\mathrm{P3.1})$ remain non-convex due to the rank-one constraints in \eqref{p11h}. To obtain tractable formulations, we first relax these constraints, leading to the SDR versions, denoted as \((\mathrm{SDR1.1})\), \((\mathrm{SDR2.1})\), and \((\mathrm{SDR3.1})\), respectively, such all three problems are convex that can be efficiently solved using standard convex optimization tools \cite{cvx}. Let \( \{ \{\bm{\mathcal{W}}_k^{\mathrm{I}}\}, \{\bm{R}_{\mathrm{o},k}^{\mathrm{I}}\}, \Theta^{\mathrm{I}} \} \), \( \{ \{\bm{\mathcal{W}}_k^{\mathrm{II}}\}, \{\bm{R}_{\mathrm{o},k}^{\mathrm{II}}\}, \Theta^{\mathrm{II}} \} \), and \( \{ \{\bm{\mathcal{W}}_k^{\mathrm{III}}\}, \{\bm{R}_{\mathrm{o},k}^{\mathrm{III}}\}, \Theta^{\mathrm{III}} \} \) denote the optimal solutions to \((\mathrm{SDR1.1})\), \((\mathrm{SDR2.1})\), and \((\mathrm{SDR3.1})\), respectively.

	It is important to emphasize that the matrices \( \{\bm{\mathcal{W}}_k^{\mathrm{I}}\} \), \( \{\bm{\mathcal{W}}_k^{\mathrm{II}}\} \), and \(\{ \bm{\mathcal{W}}_k^{\mathrm{III}}\} \) are not necessarily rank-one. As such, they generally do not satisfy the original rank-one constraints in $(\mathrm{P1.1})-(\mathrm{P3.1})$, and cannot be directly exploited as feasible solutions to the corresponding original problems. To recover solutions that satisfy the original rank-one constraints, we develop post-processing procedures that construct equivalent rank-one beamforming matrices from the SDR outputs while preserving the optimal objective values. The details are provided in the following proposition.
	\begin{proposition}\label{prop}
		The semidefinite relaxation \((\mathrm{SDR1.1})\) of problem \((\mathrm{P1.1})\) is tight. Specifically, if any $\bm{\mathcal{W}}_k^{\mathrm{I}}$ for $k \in \mathcal{K}$ is not rank-one, we can construct an alternative solution $\{\{\widetilde{\bm{\mathcal{W}}}_k\}, \{\widetilde{\bm{R}}_{\mathrm{o},k}\}, \widetilde{\Theta}\}$, which satisfies all constraints of \((\mathrm{P1.1})\) and achieves the same objective value.
		\begin{subequations}
			\begin{align}
				&\widetilde{\bm{\omega}}_k = { \bm{\mathcal{W}}_k^\dagger \bm{h}_{k,c_k} }\left(\bm{h}_{k,c_k}^H \bm{\mathcal{W}}_k^\dagger \bm{h}_{k,c_k}\right)^{-\frac{1}{2}}, \\
				&\widetilde{\bm{\mathcal{W}}}_k = \widetilde{\bm{\omega}}_k \widetilde{\bm{\omega}}_k^H, \\
				&\widetilde{\bm{R}}_{\mathrm{o},k} = \bm{\mathcal{W}}_k^\dagger + \bm{R}_{\mathrm{o},k}^\dagger - \widetilde{\bm{\mathcal{W}}}_k, \label{eq:rank-adjustment} \\
				&\widetilde{\Theta} = \Theta^\dagger.
			\end{align}
		\end{subequations}
	\end{proposition}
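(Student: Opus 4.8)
The plan is to show that the explicitly constructed triple $\{\{\widetilde{\bm{\mathcal{W}}}_k\}, \{\widetilde{\bm{R}}_{\mathrm{o},k}\}, \widetilde{\Theta}\}$ is a feasible point of $(\mathrm{P1.1})$ that attains the optimal objective value of the relaxation $(\mathrm{SDR1.1})$, where $\{\bm{\mathcal{W}}_k^\dagger, \bm{R}_{\mathrm{o},k}^\dagger, \Theta^\dagger\}$ denotes the optimal $(\mathrm{SDR1.1})$ solution. Since $(\mathrm{SDR1.1})$ is obtained from $(\mathrm{P1.1})$ by dropping only the rank constraints \eqref{p11h}, its optimal value upper-bounds that of $(\mathrm{P1.1})$; hence exhibiting a rank-one feasible point whose objective equals $\Theta^\dagger$ proves tightness and global optimality at once.

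The organizing idea is that the construction leaves the per-cell transmit covariance sum invariant. From the definition of $\widetilde{\bm{R}}_{\mathrm{o},k}$ in \eqref{eq:rank-adjustment} one obtains $\widetilde{\bm{\mathcal{W}}}_k + \widetilde{\bm{R}}_{\mathrm{o},k} = \bm{\mathcal{W}}_k^\dagger + \bm{R}_{\mathrm{o},k}^\dagger$ for all $k \in \mathcal{K}$. I would then observe that the sensing objective constraint \eqref{p11b}, the average-power constraint \eqref{p11d}, and the per-BS power budget \eqref{p11e} each depend on the variables only through these sums; hence all three families hold with unchanged values, and taking $\widetilde{\Theta} = \Theta^\dagger$ reproduces the objective exactly. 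The rank-one requirement \eqref{p11h} and $\widetilde{\bm{\mathcal{W}}}_k \succeq \bm{0}$ in \eqref{p11g} hold by construction, since $\widetilde{\bm{\mathcal{W}}}_k = \widetilde{\bm{\omega}}_k \widetilde{\bm{\omega}}_k^H$.

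The only constraint not governed purely by the sum is the Type-I SINR constraint, whose desired-signal term involves $\bm{\mathcal{W}}_k$ alone. For this I would substitute the definition of $\widetilde{\bm{\omega}}_k$ to get $\bm{h}_{k,c_k}^H \widetilde{\bm{\omega}}_k = (\bm{h}_{k,c_k}^H \bm{\mathcal{W}}_k^\dagger \bm{h}_{k,c_k})^{1/2}$, whence $\Tr(\bm{h}_{k,c_k}\bm{h}_{k,c_k}^H \widetilde{\bm{\mathcal{W}}}_k) = \bm{h}_{k,c_k}^H \bm{\mathcal{W}}_k^\dagger \bm{h}_{k,c_k}$; the desired-signal power is thus preserved, while the interference terms on the other side again depend only on the sums and are therefore unchanged, so the constraint remains satisfied. (Feasibility of the relaxation guarantees $\bm{h}_{k,c_k}^H \bm{\mathcal{W}}_k^\dagger \bm{h}_{k,c_k} > 0$, so $\widetilde{\bm{\omega}}_k$ is well defined.)

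The genuinely nontrivial step, and the main obstacle, is verifying the remaining positive-semidefiniteness constraint \eqref{p11f}, namely $\widetilde{\bm{R}}_{\mathrm{o},k} \succeq \bm{0}$. Because $\bm{R}_{\mathrm{o},k}^\dagger \succeq \bm{0}$ already, it suffices to prove the matrix inequality
$$\bm{\mathcal{W}}_k^\dagger \succeq \widetilde{\bm{\mathcal{W}}}_k = \frac{\bm{\mathcal{W}}_k^\dagger \bm{h}_{k,c_k} \bm{h}_{k,c_k}^H \bm{\mathcal{W}}_k^\dagger}{\bm{h}_{k,c_k}^H \bm{\mathcal{W}}_k^\dagger \bm{h}_{k,c_k}}.$$
I would establish this by factoring the Hermitian square root $(\bm{\mathcal{W}}_k^\dagger)^{1/2}$ from both sides and setting $\bm{g} = (\bm{\mathcal{W}}_k^\dagger)^{1/2}\bm{h}_{k,c_k}$, which reduces the claim to $\bm{I} - \bm{g}\bm{g}^H/\|\bm{g}\|^2 \succeq \bm{0}$. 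The latter holds because $\bm{g}\bm{g}^H/\|\bm{g}\|^2$ is an orthogonal projection with eigenvalues in $\{0,1\}$, and this square-root factorization remains valid even when $\bm{h}_{k,c_k}$ lies partly outside the range of $\bm{\mathcal{W}}_k^\dagger$. Establishing this inequality closes the feasibility check; combined with the matched objective value $\widetilde{\Theta}=\Theta^\dagger$, it yields the asserted tightness of $(\mathrm{SDR1.1})$.
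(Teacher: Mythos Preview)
Your proposal is correct and follows essentially the same strategy as the paper: exploit the invariance of the per-cell covariance sum $\widetilde{\bm{\mathcal{W}}}_k+\widetilde{\bm{R}}_{\mathrm{o},k}=\bm{\mathcal{W}}_k^\dagger+\bm{R}_{\mathrm{o},k}^\dagger$ to handle all constraints except the desired-signal term in the SINR inequality, then verify that term directly and finally check $\widetilde{\bm{R}}_{\mathrm{o},k}\succeq\bm{0}$ via $\widetilde{\bm{\mathcal{W}}}_k\preceq\bm{\mathcal{W}}_k^\dagger$. The only notable difference is that the paper asserts $\widetilde{\bm{\mathcal{W}}}_k\preceq\bm{\mathcal{W}}_k^\dagger$ in a single sentence (calling it a ``rank-one projection''), whereas you spell it out via the square-root factorization $(\bm{\mathcal{W}}_k^\dagger)^{1/2}$ and the observation that $\bm{g}\bm{g}^H/\|\bm{g}\|^2$ is an orthogonal projector; this is a cleaner and more self-contained justification of the same fact, and your remark about $\bm{h}_{k,c_k}$ possibly lying partly outside the range of $\bm{\mathcal{W}}_k^\dagger$ addresses a technicality the paper leaves implicit.
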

	\begin{proof}
		Please refer to \ref{proofofprop}.	
	\end{proof}
	The conclusion established in Proposition \ref{prop} also applies to \((\mathrm{P2.1})\) and \((\mathrm{P3.1})\), for which the details are omitted for brevity. Hence, solving $(\mathrm{SDR1.1})$, $(\mathrm{SDR2.1})$, and $(\mathrm{SDR3.1})$ directly yields the globally optimal value of their corresponding original problems $(\mathrm{P1.1})$, $(\mathrm{P2.1})$, and $(\mathrm{P3.1})$, respectively. 
	
	The only distinction among $(\mathrm{SDR1.1})$, $(\mathrm{SDR2.1})$, and $(\mathrm{SDR3.1})$ lies in the SINR constraints, which capture the CUs' capability to mitigate interference. Since SIC eliminates non-negative interference terms in the SINR models, the SINR constraint in $(\mathrm{SDR2.1})$ is a relaxation of that in $(\mathrm{SDR1.1})$, and the constraint in $(\mathrm{SDR3.1})$ is in turn a relaxation of that in $(\mathrm{SDR2.1})$. Consequently, every feasible point of $(\mathrm{SDR1.1})$ is also feasible for $(\mathrm{SDR2.1})$ and $(\mathrm{SDR3.1})$. As all three problems maximize the same objective $\Theta$, their optimal values satisfy $\Theta^{\mathrm{III}} \geq \Theta^{\mathrm{II}} \geq \Theta^{\mathrm{I}}$. Interestingly, despite the relaxation, $(\mathrm{SDR1.1})$ and $(\mathrm{SDR2.1})$ achieve the same optimal objective value, as summarized in the following corollary.

	\begin{corollary}\label{corollary}
		The optimal objective values of \((\mathrm{SDR1.1})\) and \((\mathrm{SDR2.1})\) are identical, i.e., $\Theta^{\mathrm{I}} = \Theta^{\mathrm{II}}$.
	\end{corollary}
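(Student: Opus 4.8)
The plan is to prove the two inequalities $\Theta^{\mathrm{II}}\ge\Theta^{\mathrm{I}}$ and $\Theta^{\mathrm{I}}\ge\Theta^{\mathrm{II}}$ separately. The first is already available from the discussion preceding the corollary: since the Type-II SINR constraint merely drops the nonnegative intra-cell term $\bm{h}_{k,c_k}^H\bm{R}_{\mathrm{o},k}\bm{h}_{k,c_k}$ relative to the Type-I one, every feasible point of $(\mathrm{SDR1.1})$ is feasible for $(\mathrm{SDR2.1})$, whence $\Theta^{\mathrm{II}}\ge\Theta^{\mathrm{I}}$. The crux is therefore the reverse inequality, which I would establish by a feasibility-preserving map that sends any optimal solution of $(\mathrm{SDR2.1})$ to a feasible point of $(\mathrm{SDR1.1})$ attaining the same objective.

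Concretely, let $\{\{\bm{\mathcal{W}}_l^{\mathrm{II}}\},\{\bm{R}_{\mathrm{o},l}^{\mathrm{II}}\},\Theta^{\mathrm{II}}\}$ be optimal for $(\mathrm{SDR2.1})$ and define the candidate $\widehat{\bm{\mathcal{W}}}_l=\bm{\mathcal{W}}_l^{\mathrm{II}}+\bm{R}_{\mathrm{o},l}^{\mathrm{II}}$, $\widehat{\bm{R}}_{\mathrm{o},l}=\bm{0}$ for all $l\in\mathcal{K}$, together with $\widehat{\Theta}=\Theta^{\mathrm{II}}$. The guiding observation is that the sensing constraint \eqref{p11b}, the energy-harvesting requirement \eqref{p11d}, and the power budget \eqref{p11e} all depend on each BS only through the sum $\bm{\mathcal{W}}_l+\bm{R}_{\mathrm{o},l}$, which this map leaves unchanged; consequently these three constraints and the attained objective value are preserved verbatim, while $\widehat{\bm{R}}_{\mathrm{o},l}=\bm{0}\succeq\bm{0}$ and $\widehat{\bm{\mathcal{W}}}_l\succeq\bm{0}$ hold trivially. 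It then remains only to verify the Type-I SINR constraints.

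For this final step I would substitute the candidate into the Type-I SINR constraint of an arbitrary CU $c_k$. Because $\widehat{\bm{R}}_{\mathrm{o},l}=\bm{0}$, its left-hand side collapses to $\sum_{l\in\mathcal{K}}\bm{h}_{l,c_k}^H(\bm{\mathcal{W}}_l^{\mathrm{II}}+\bm{R}_{\mathrm{o},l}^{\mathrm{II}})\bm{h}_{l,c_k}+\sigma_{\mathrm{c}}^2$, which is precisely the Type-II left-hand side augmented by the reinstated intra-cell term $Y:=\bm{h}_{k,c_k}^H\bm{R}_{\mathrm{o},k}^{\mathrm{II}}\bm{h}_{k,c_k}$; meanwhile its right-hand side becomes $\frac{1+\Gamma_{c_k}}{\Gamma_{c_k}}\bm{h}_{k,c_k}^H(\bm{\mathcal{W}}_k^{\mathrm{II}}+\bm{R}_{\mathrm{o},k}^{\mathrm{II}})\bm{h}_{k,c_k}$, i.e., the Type-II right-hand side augmented by $\frac{1+\Gamma_{c_k}}{\Gamma_{c_k}}Y$. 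Adding $Y$ to both sides of the Type-II feasibility inequality and using $\frac{1+\Gamma_{c_k}}{\Gamma_{c_k}}\ge 1$, the required Type-I inequality follows with slack $\frac{1}{\Gamma_{c_k}}Y\ge 0$, valid since $\bm{R}_{\mathrm{o},k}^{\mathrm{II}}\succeq\bm{0}$ and $\Gamma_{c_k}>0$. The single conceptual point — and the only real subtlety — is that absorbing $\bm{R}_{\mathrm{o},k}^{\mathrm{II}}$ into $\bm{\mathcal{W}}_k$ relabels the intra-cell term from denominator interference (coefficient one) into desired-signal power (effective coefficient $\frac{1+\Gamma_{c_k}}{\Gamma_{c_k}}>1$), so the amplification by a factor at least one exactly offsets reinstating that term as interference. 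I would remark that the same argument fails for $(\mathrm{SDR3.1})$, because there the dropped \emph{inter}-cell dual-purpose terms survive as interference after relabeling and are not amplified, consistent with the generally strict ordering $\Theta^{\mathrm{III}}\ge\Theta^{\mathrm{II}}$. Since the candidate is thus $(\mathrm{SDR1.1})$-feasible with objective $\Theta^{\mathrm{II}}$, we obtain $\Theta^{\mathrm{I}}\ge\Theta^{\mathrm{II}}$ and hence $\Theta^{\mathrm{I}}=\Theta^{\mathrm{II}}$.
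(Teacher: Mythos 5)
Your proof is correct, and it takes a genuinely different route from the paper's. The paper argues by contradiction on the structure of the $(\mathrm{SDR1.1})$ optimum: it decomposes $\bm{R}_{\mathrm{o},k}^{\mathrm{I}}$ into a component orthogonal to $\bm{h}_{k,c_k}$ plus a rank-one component $\alpha_k\bm{u}_k\bm{u}_k^H$ along the CU channel, shifts only that rank-one piece into $\bm{\mathcal{W}}_k^{\mathrm{I}}$, observes that the SINR strictly improves while all sum-dependent constraints are untouched, and concludes that $\bm{h}_{k,c_k}^H\bm{R}_{\mathrm{o},k}^{\mathrm{I}}\bm{h}_{k,c_k}=0$ must hold at optimality, so the Type-I and Type-II SINR constraints coincide there. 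You instead prove the two inequalities directly, and for the nontrivial direction $\Theta^{\mathrm{I}}\ge\Theta^{\mathrm{II}}$ you map a $(\mathrm{SDR2.1})$ optimum into $(\mathrm{SDR1.1})$ by absorbing the \emph{entire} dual-purpose covariance into $\bm{\mathcal{W}}_l$ and verifying the Type-I SINR constraint via the bookkeeping identity that the reinstated intra-cell term $Y$ enters the left-hand side with coefficient $1$ but the right-hand side with coefficient $\tfrac{1+\Gamma_{c_k}}{\Gamma_{c_k}}\ge 1$. The underlying mechanism (total-covariance invariance of the sensing, harvesting, and power constraints, plus the amplification factor $1+1/\Gamma_{c_k}$) is the same in both arguments, but your version is logically tighter: it works in the direction that is actually needed (from the Type-II optimum to Type-I feasibility), whereas the paper's contradiction step leans on the informal claim that ``all constraints must be active at optimum,'' which is not true in general and leaves the final equality somewhat under-justified. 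Your closing remark on why the construction does not extend to $(\mathrm{SDR3.1})$ (the uncompensated inter-cell terms) is also correct and consistent with the ordering $\Theta^{\mathrm{III}}\ge\Theta^{\mathrm{II}}$ stated in the paper.
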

	\begin{proof}
		See Appendix \ref{corollaryproof}.
	\end{proof}
	The equality \(\Theta^{\mathrm{I}} = \Theta^{\mathrm{II}}\) indicates that, under the optimal SDR solution, the dual-purpose transmit covariance \(\bm{R}_{o,k}\) at each BS lies entirely in the null space spanned by the associated CU’s channel, i.e., \(\bm{h}_{k,c_k}^{H}\bm{R}_{o,k}\bm{h}_{k,c_k}=0\). As a result, intra-cell interference is already eliminated through transmit-side beamforming optimization, and the additional cancellation capability available at Type-II CUs offers no further improvement over Type-I CUs.
	\subsection{Suboptimal MRT-Based Beamforming Design}  \label{suboptimalsolu}
	While the previous subsection established the optimal solution via the SDR technique, here we propose a practical and low-complexity alternative based on MRT \cite{Lo1999MRT}. In this approach, the information beamforming vectors are fixed to align with the channels of their respective CUs according to the following MRT principle. The dual-purpose signal is constructed by exploiting two predefined beams, with one dedicated to wireless power transfer, aligned with the dominant eigenvector of location-averaged channel matrix of ERs; and the other dedicated to sensing, oriented along the principal eigenvector of the aggregated round-trip response.

	More specifically, the information signal at BS $k$ is transmitted leveraging a fixed beamforming vector \( \bm{\omega}_{k} = \sqrt{\varrho_{c,k}} \frac{\bm{h}_{k,c_k}}{\|\bm{h}_{k,c_k}\|} \), where \( \varrho_{c,k} \ge 0 \) denotes the allocated power for information transfer that will be optimized. The covariance matrix of dual-purpose signal transmitted by BS~\(k\) is structured as 
	\begin{equation}\label{requ}
		\bm{R}_{\mathrm{o},k} = \varrho_{e,k} \bm{\nu}_{e,k} \bm{\nu}_{e,k}^H + \varrho_{s,k} \bm{\nu}_{s,k} \bm{\nu}_{s,k}^H,
	\end{equation}
	where \( \bm{\nu}_{e,k} \in\mathbb{C}^{N\times 1}\) and \( \bm{\nu}_{s,k} \in\mathbb{C}^{N\times 1}\) are unit-norm vectors defining the fixed transmission directions for WPT and sensing, respectively. Here, $\bm{\nu}_{e,k}$ is selected as the principal eigenvector of \( \bm{G}_{k,e_k} \), and $\bm{\nu}_{s,k}$ is derived as the dominant eigenvector of the spatially averaged round-trip matrix, i.e., \( \bm{A}_k = \frac{1}{M} \sum_{s_m\in\mathcal{S}} \frac{|\zeta|^2\lambda^2}{16 \pi^2  } \sum\nolimits_{n=1}^{N} \frac{\bm{\mathcal{H}}_{k,s_m}}{\|\mathbf{q}_{n,k} - \mathbf{p}_{s_m}\|^2} \), which captures the dominant echo direction across all sensing points. \textcolor{black}{Intuitively, selecting $\bm{\nu}_{e,k}$ as the principal eigenvector of $\bm{G}_{k,e_k}$ aligns the WPT beam with the statistically strongest ER channel direction and maximizes the spatially averaged harvested power at ER $e_k$. Similarly, choosing $\bm{\nu}_{s,k}$ as the dominant eigenvector of $\bm{A}_k$ steers the sensing beam toward the most significant echo direction over the sampled sensing region, thereby enhancing the average received echo power.}  Furthermore, the scalars \( \varrho_{e,k} \ge 0 \) and \( \varrho_{s,k} \ge 0 \) in \eqref{requ} represent the power allocations to these two beams, which will also be optimized. Notice that the design problem reduces to optimizing the scalar power allocation among communication, energy transfer, and sensing at each BS. To this end, we formulate three power allocation problems, corresponding to the cases with Type-I, Type-II, and Type-III CUs, respectively, in the following,
	\begin{subequations} \label{P-MRT-Explicit}
		\allowdisplaybreaks
		\begin{align}
			\hspace{-3mm}   (\mathrm{P4}): \quad
			\hspace{-5mm}\max_{\{ \varrho_{c,k}, \varrho_{e,k}, \varrho_{s,k}\} ,\Theta} \hspace{-3mm} &\quad \quad \Theta \\
			\text{s.t.} \quad\quad\quad &\hspace{-5mm}\sum\nolimits_{n=1}^{N}\sum\nolimits_{l\in\mathcal{K},l\neq k}
			\frac{\lambda^2|\zeta|^2}{16 \pi^2  \|\mathbf{q}_{n,l} - \mathbf{p}_{s_m}\|^2}\nonumber\\
			&\hspace{-5mm}\bigg(\hspace{-1mm}
			\varrho_{c,l} \frac{\bm{h}_{l,c_l}^H \bm{\mathcal{H}}_{l,s_m} \bm{h}_{l,c_l}}{\|\bm{h}_{l,c_l}\|^2}+ \varrho_{e,l}  \bm{\nu}_{e,l}^H \bm{\mathcal{H}}_{l,s_m} \bm{\nu}_{e,l}\nonumber\\
			&\hspace{-5mm}+ \varrho_{s,l}  \bm{\nu}_{s,l}^H \bm{\mathcal{H}}_{l,s_m} \bm{\nu}_{s,l}
			\bigg)
			\ge \Theta,\forall s_m \in \mathcal{S},\label{24b}\\
			& \hspace{-5mm}\sum\nolimits_{l \in \mathcal{K},l\neq k} 
			\varrho_{c,l} \frac{|\bm{h}_{l,c_k}^H \bm{h}_{l,c_l}|^2}{\|\bm{h}_{l,c_l}\|^2} 
			+ \sigma_{\mathrm{c}}^2\nonumber\\
			&\hspace{-5mm}+ \sum\nolimits_{l \in \mathcal{K}}\varrho_{e,l}|\bm{h}_{l,c_k}^H \bm{\nu}_{e,l}|^2\hspace{-1mm}+\hspace{-1mm}\varrho_{s,l} |\bm{h}_{l,c_k}^H \bm{\nu}_{s,l}|^2 \big)\nonumber\\
			&\hspace{-5mm}\le\frac{\varrho_{c_k} \|\bm{h}_{k,c_k}\|^2}{\Gamma_{c_k}},\forall c_k \in\mathcal{C},  \\
			& \hspace{-5mm}\sum\nolimits_{l \in \mathcal{K}}  \bigg( 
			\varrho_{c,l} \frac{ \bm{h}_{l,c_l}^H \bm{G}_{l,e_k} \bm{h}_{l,c_l} }{ \|\bm{h}_{l,c_l}\|^2 }\nonumber\\
			&\hspace{-5mm} +\varrho_{e,l} \bm{\nu}_{e,l}^H \bm{G}_{l,e_k} \bm{\nu}_{e,l}+ \varrho_{s,l} \bm{\nu}_{s,l}^H \bm{G}_{l,e_k} \bm{\nu}_{s,l}\bigg)\nonumber\\
			&\hspace{-5mm}\ge \frac{\Omega_k}{\eta},
			\forall e_k\in\mathcal{E},\label{24d}\\
			&\hspace{-5mm} \varrho_{c,k}\! +\! \varrho_{e,k}\! +\! \varrho_{s,k} \!\le\! P_{\max},  \forall k\in\mathcal{K}, \label{24e} \\
			&\hspace{-5mm} \varrho_{c,k},\, \varrho_{e,k},\, \varrho_{s,k} \ge 0,\quad \forall k\in\mathcal{K}, \label{24f}
		\end{align}
	\end{subequations}
	\begin{subequations} 
		\begin{align}
			(\mathrm{P5}): \quad\quad\quad
			\hspace{-5mm}\max_{\{ \varrho_{c,k}, \varrho_{e,k}, \varrho_{s,k}\},\Theta}  & \quad\quad \Theta \\
			\text{s.t.}\quad\quad\quad \quad 
			&\hspace{-12mm} \sum\limits_{l \in \mathcal{K},l\ne k} 
			\varrho_{c,l} \frac{|\bm{h}_{l,c_k}^H \bm{h}_{l,c_l}|^2}{\|\bm{h}_{l,c_l}\|^2} 
			+ \sigma_{\mathrm{c}}^2+\nonumber\\
			& \hspace{-12mm}\sum\limits_{l \in \mathcal{K},l\ne k}\hspace{-3mm}\big(\varrho_{e,l}|\bm{h}_{l,c_k}^H \bm{\nu}_{e,l}|^2+\varrho_{s,l} |\bm{h}_{l,c_k}^H \bm{\nu}_{s,l}|^2\big)\nonumber\\
			&\hspace{-12mm}\le
			\frac{\varrho_{c,k} \|\bm{h}_{k,c_k}\|^2}{\Gamma_{c_k}},	\forall c_k \in\mathcal{C},  \\
			&\hspace{-12mm}\eqref{24b}, \,\eqref{24d}-\eqref{24f},\nonumber
		\end{align}
	\end{subequations} 
	\begin{subequations} 
		\begin{align}
			(\mathrm{P6}): \quad
			\hspace{3mm}\max_{\{ \varrho_{c,k}, \varrho_{e,k}, \varrho_{s,k}\},\Theta}  & \quad\quad \Theta \\
			\text{s.t.}\hspace{5mm} \quad 
			&\hspace{-5mm} \sum\limits_{l \in \mathcal{K},l\ne k} 
			\varrho_{c,l} \frac{|\bm{h}_{l,c_k}^H \bm{h}_{l,c_l}|^2}{\|\bm{h}_{l,c_l}\|^2} 
			+ \sigma_{\mathrm{c}}^2\nonumber\\
			&
			\hspace{-5mm}\le
			\frac{\varrho_{c,k} \|\bm{h}_{k,c_k}\|^2}{\Gamma_{c_k}},	\forall c_k \in\mathcal{C},  \\
			&\hspace{-5mm}\eqref{24b}, \,\eqref{24d}-\eqref{24f}.\nonumber
		\end{align}
	\end{subequations}
	Notice that problems~$\mathrm{(P4)}$, $\mathrm{(P5)}$ and $\mathrm{(P6)}$ are linear programs (LPs) in the scalar variables \( \{\varrho_{c,k}, \varrho_{e,k}, \varrho_{s,k}\} \), along with the auxiliary optimization variable \( \Theta \).  Owing to the fixed beam directions, all quadratic terms reduce to scalar constants, transforming the constraints into affine functions of the power-allocation variables. As a result, all three problems can be efficiently solved using standard LP numerical solvers to directly obtain the power allocation for the suboptimal MRT-based design.
	
	It is worth emphasizing that  when the CUs, ERs, and sensing points are located at distinct positions, the near-field channel vectors become asymptotically orthogonal as the number of antennas $N \to \infty$ \cite{liu2023near,wu2023multiple}. Specifically,  $\forall\,l \ne k$, the normalized inner product $|\bm{h}_{l, c_k}^H \bm{h}_{l, c_l}|^2 / \|\bm{h}_{l, c_l}\|^2$ and the squared inner products $|\bm{h}_{l, c_k}^H \bm{\nu}_{e,l}|^2$ and $|\bm{h}_{l, c_k}^H \bm{\nu}_{s,l}|^2$ all vanish in the limit $N \to \infty$. As a result, the interference terms in the SINR constraint become negligible in the large-array regime, rendering problems~$\mathrm{(P4)}$--$\mathrm{(P5)}$ asymptotically equivalent. Furthermore, a similar orthogonality effect arises at the ERs and sensing points, where signals not intended for a given ER or sensing location, such as information beams or transmissions from other BSs, contribute negligibly to its received power. Therefore, the powering and sensing functionalities depend solely on their respective designated beams, and the associated constraints become effectively decoupled among the BSs.   Such asymptotic decoupling enables  closed-form power allocation solutions at each BS based solely on local channel parameters, as described in the following remarks.
	
	\begin{remark}\label{re1}
		For the MRT-based approach, the asymptotic orthogonality of near-field channels leads to simple closed-form power allocation at BS~\(k\), which is given by $\varrho_{c,k}^*=\frac{\Gamma_{c_k}\sigma_{\mathrm{c}}^{2}}{\|\bm h_{k,c_k}\|^{2}}$, $
		\varrho_{e,k}^*=\frac{\Omega_k/\eta}{\bm\nu_{e,k}^{H}\bm G_{k,e_k}\bm\nu_{e,k}}$, and $\varrho_{s,k}^*=P_{\max}-\varrho_{c,k}^*-\varrho_{e,k}^*
		$. 
	\end{remark}
	
	\begin{remark}
		The closed-form solution in Remark \ref{re1} is feasible if and only if $
		\varrho_{c_k}^* \le P_{\max}$, $\varrho_{e_k}^* \le P_{\max}$ and $\varrho_{e_k}^*+\varrho_{e_k}^* \le P_{\max}$, $\forall c_k\in\mathcal{C}$, $e_k\in\mathcal{E}$, respectively.
	\end{remark}

	\subsection{Complexity Analysis}
	We now provide a detailed computational complexity analysis of the proposed solutions and discuss their implications for practical implementation. For the SDR-based design, which involves solely an SDP, the per-iteration computational complexity is on the order of $\mathcal{O}((KN)^6)$, by using standard interior-point methods \cite{boyd2004convex}. In contrast, for the MRT-based approach, the resulting LP has a complexity of $\mathcal{O}(K^3)$, and in the large-array regime, where a closed-form solution exists, the complexity is further reduced to $\mathcal{O}(K)$.
	
	\textcolor{black}{Hence, while the SDR-based approach achieves optimal performance through precise convex optimization, the MRT-based solution provides enhanced scalability and offers a favorable trade-off between computational complexity and performance, making it attractive for large-array or dense network deployments. Nevertheless, both schemes have polynomial-time computational complexity and can be efficiently executed for most practical system dimensions envisioned in future 6G networks.}

	\begin{table}[t]
		\centering
		\caption{System Parameters}
		\setlength{\extrarowheight}{2pt}
		\label{tabparameters}
		\begin{tabular}{|l|l|l|}
			\hline
			\textbf{Parameter} & \textbf{Symbol} & \textbf{Value} \\
			\hline
			Number of antenna elements & $N$ & 64 \\
			\hline
			Number of BSs & $K$ & 3 \\
			\hline
			Space between antenna elements & $d$ & $0.0625$ m \\
			\hline
			RF-to-DC conversion efficiency & $\eta$ & 0.7 \\
			\hline
			RCS  & $|\zeta|$ & 1 \\
			\hline
			Transmit power budget & $P_{\max}$ & 27 dBm \\
			\hline
			CU noise power & $\sigma_{\mathrm{c}}^2$ & -50 dBm \\
			\hline
			Sensing noise power & $\sigma_{\mathrm{s}}^2$ & -97 dBm \\
			\hline
			Carrier frequency & $f$ & 2.4 GHz \\
			\hline
			Rayleigh distance & - & $248.1$\,m\\
			\hline
		\end{tabular}
		\vspace{-8pt}   
	\end{table}

	\begin{figure*}[!t]
		\centering
		\subfloat[Case~1.]{%
			\includegraphics[width=0.3\linewidth]{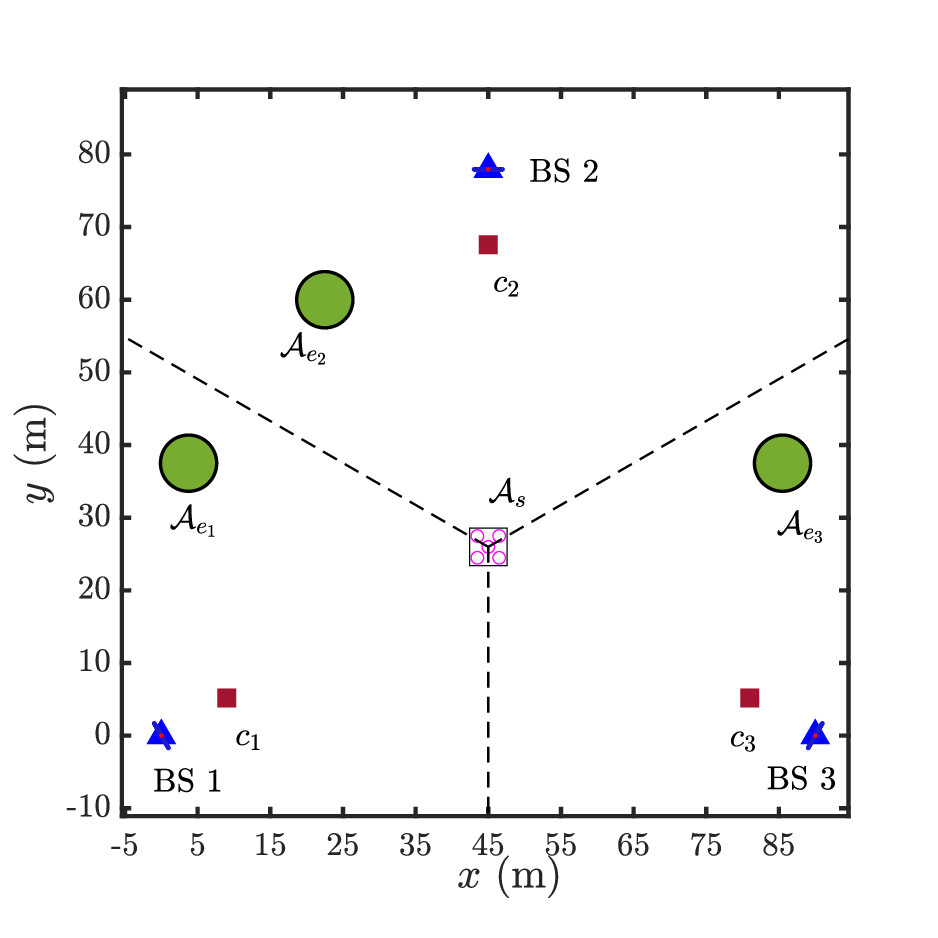}%
			\label{layout_case1}}
		\hfil
		\subfloat[Case~2.]{%
			\includegraphics[width=0.3\linewidth]{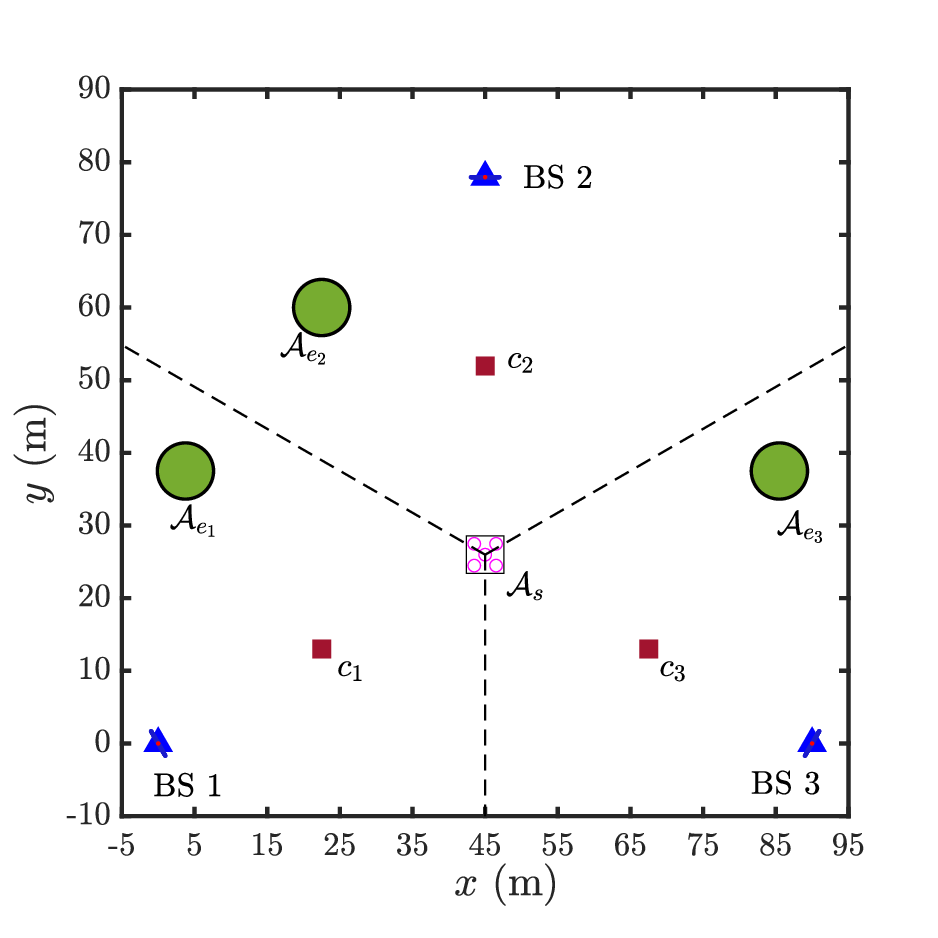}%
			\label{layout_case2}}
		\hfil
		\subfloat[Case~3.]{%
			\includegraphics[width=0.3\linewidth]{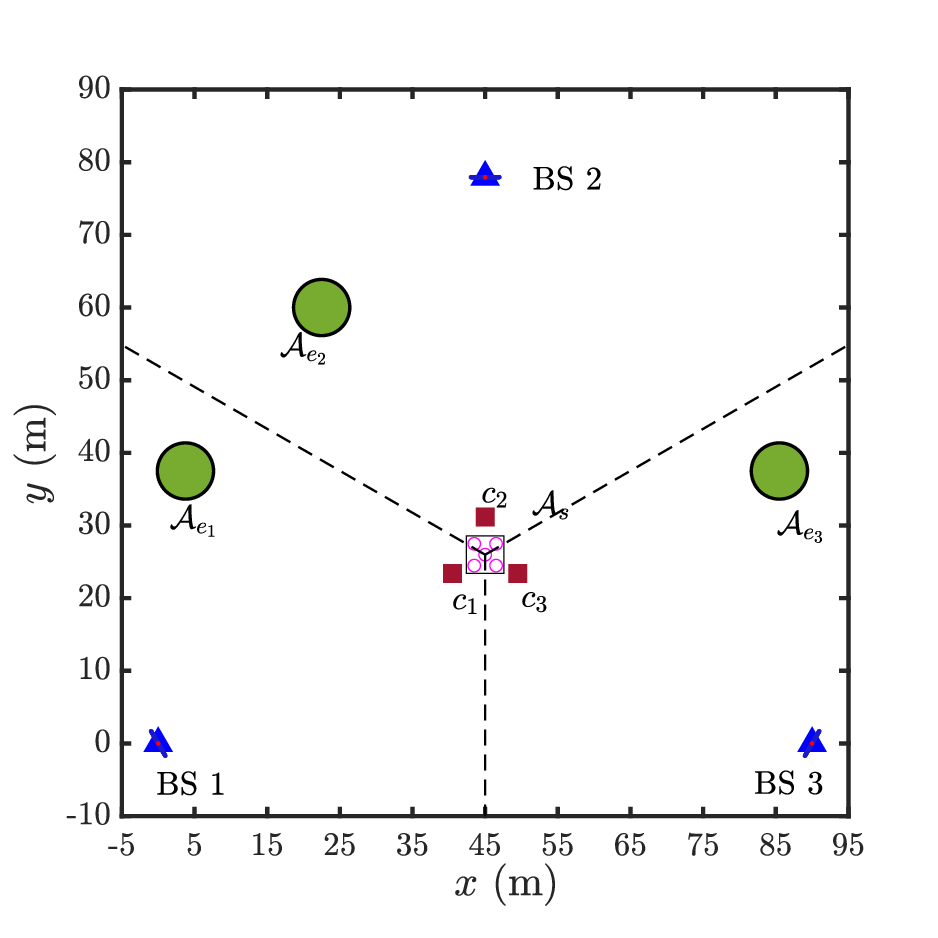}%
			\label{layout_case3}}
		\caption{Geometric layouts of the network showing BSs (blue triangles), CUs (red squares), ER uncertainty regions (green disks), and sensing sample points (pink circles) in three representative configurations.  Case 1: CUs are far from the sensing target; Case 2: CUs are at moderate distances from the sensing target; and Case 3: CUs are situated in close proximity to the sensing target.}
		\label{layoutlabel}
	\end{figure*}

	\section{Numerical Results}\label{numres}
	In this section, we present numerical results to validate the effectiveness of the proposed coordinated beamforming design for the multi-cell near-field ISCAP system. As illustrated in Fig.~\ref{layoutlabel}, we consider a network of $K=3$ BSs located at $(0,0)$\,m, $(45,45\sqrt{3})$\,m, and $(90,0)$\,m, with the corresponding Rayleigh distance $248.1$\,m. The antenna arrays at BS~1, BS~2, and BS~3 are oriented toward \(120^\circ\), \(0^\circ\), and \(60^\circ\), respectively. For the considered transmission interval, each ER is assumed to lie within a circular uncertainty region centered at \((3.75,37.5)\)\,m, \((22.5,60.0)\)\,m, and \((85.5,37.5)\)\,m, respectively. The sensing area is modeled as a square region with a side length of $3$\,m, where \( M = 5 \) sensing points are symmetrically placed within this square \cite{cheng2024optimal}. Moreover, throughout the numerical results, we consider $\Gamma_{c_i}=\Gamma_{c_j}$, $\forall c_i,c_j\in\mathcal{C}$, and $\Omega_{e_i}=\Omega_{e_j}$, $|\mathcal{A}_{e_i}|=|\mathcal{A}_{e_j}|$, $\forall e_i,e_j\in\mathcal{E}$. Unless otherwise stated, the rest of the system parameters are summarized in Table~\ref{tabparameters}. \textcolor{black}{It is important to note that, the developed mathematical
	framework are applicable for various network parameters, and
	the selection of these parameter values is for the purpose of
	presenting the achieved performance of our proposed beamforming
	designs.} Using different values will lead to a shifted
	network performance, but with the same conclusions. For comparison purposes, we evaluate the following different transmission configurations under various network scenarios,
	\begin{itemize}
		\item \textbf{SDR-based optimal:} the proposed SDR-based optimal beamforming design;
		\item \textbf{MRT-based sub-optimal:} the proposed MRT-based sub-optimal beamforming design;
		\item \textbf{Non-coordinated scheme:} a non-coordinated SDR-based beamforming design, where each BS optimizes its transmission independently without inter-cell coordination \cite{yilongiscap};
		\item \textbf{Worst-case robust:} an SDR-based beamforming design where the ER uncertainty region is discretized into $9$ sample points, and the harvested energy constraint is enforced at each point to ensure the minimum harvested power to meet the threshold \cite{Xiang6211384};
		\item \textbf{Far-field configuration:} obtained by setting a smaller number of antennas (e.g., $N = 16$) such that the array mainly operates in the far-field region \cite{cheng2024optimal}.
	\end{itemize}

	\begin{figure}[t]
		\centering
		\includegraphics[width=7cm]{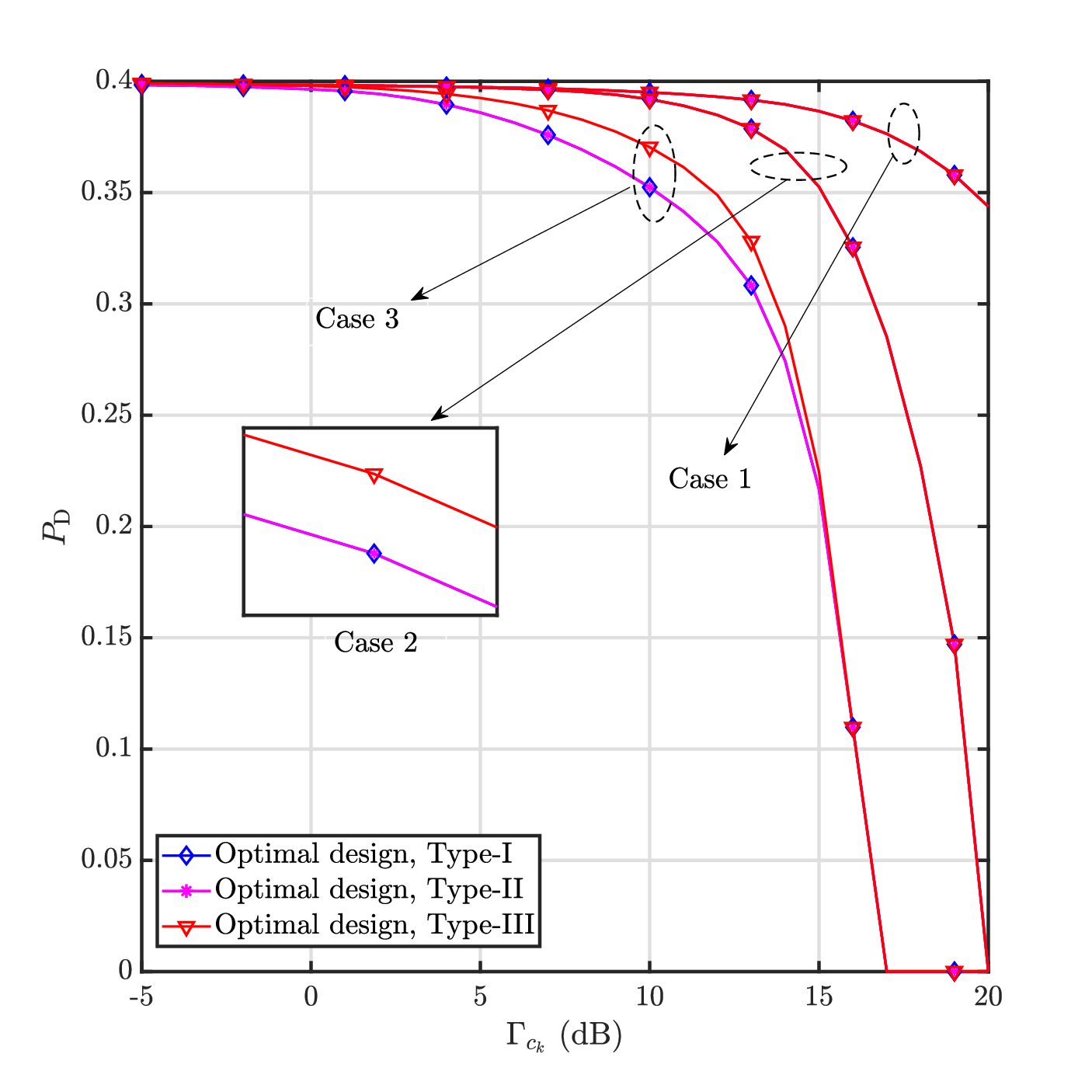}
		\caption{Detection probability $P_{\rm D}$ versus the CU SINR threshold $\Gamma_{c_k}$ (dB) under the SDR-based optimal beamforming design, where \( \Omega_{e_k} = -30\,\mathrm{dBm} \), \( P_{\rm FA} = 10^{-4} \), and $|\mathcal{A}_{e_k}|=0$.
		}
		\label{figlayoutdiscuss}
	\end{figure}
	
	\begin{figure}[t]
		\centering
		\includegraphics[width=7cm]{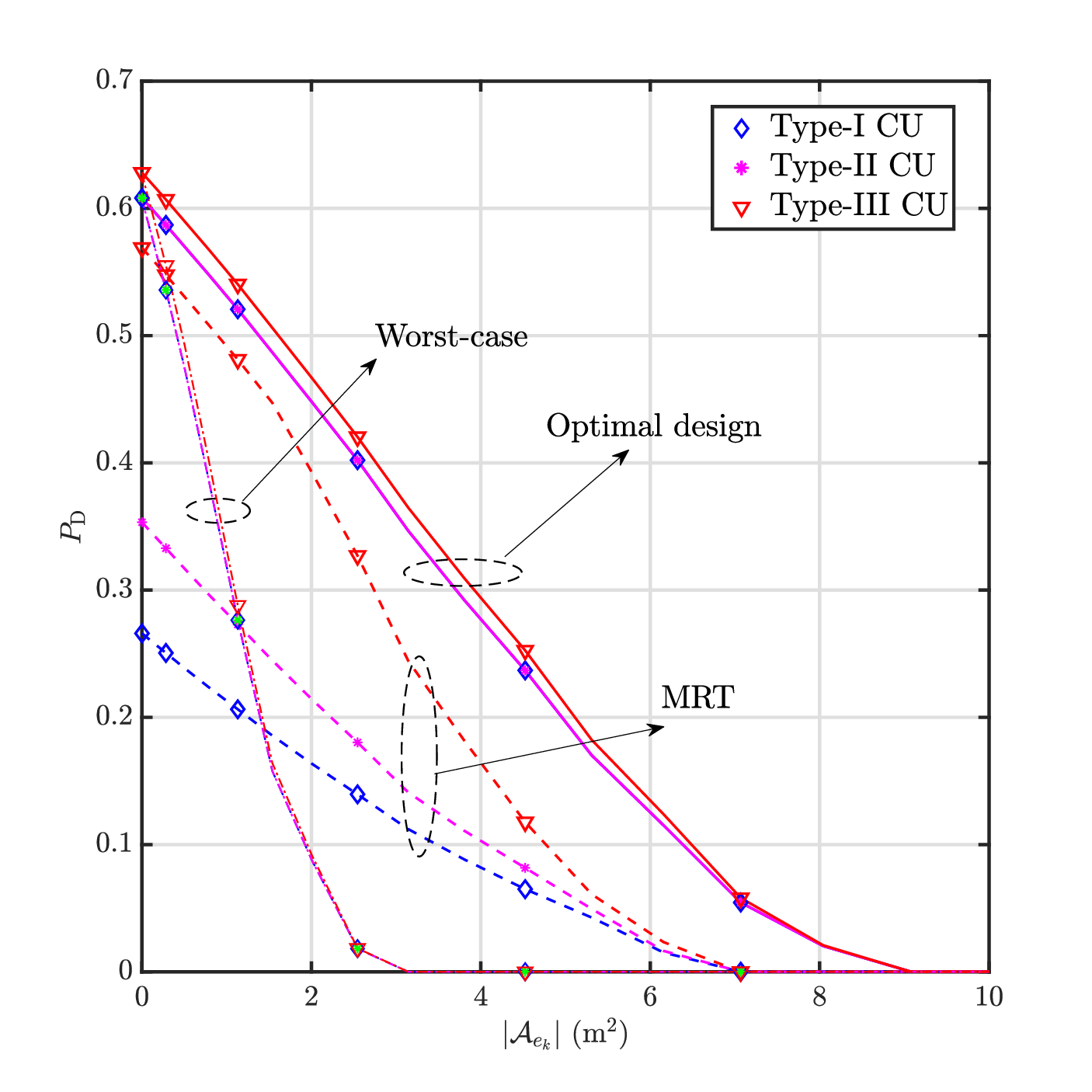}
		\caption{Detection probability \( P_{\rm D} \) versus area of uncertainty region $|\mathcal{A}_{e_k}|$ $(\mathrm{m}^2)$ for different CU types, beamforming designs, where \( \Omega_{e_k} = -30\,\mathrm{dBm} \), \( P_{\rm FA} = 10^{-4} \), and Case 3 configuration is adopted.}
		\label{area_effect}
	\end{figure}
	\begin{figure}[t]
			\centering
		\includegraphics[width=7cm]{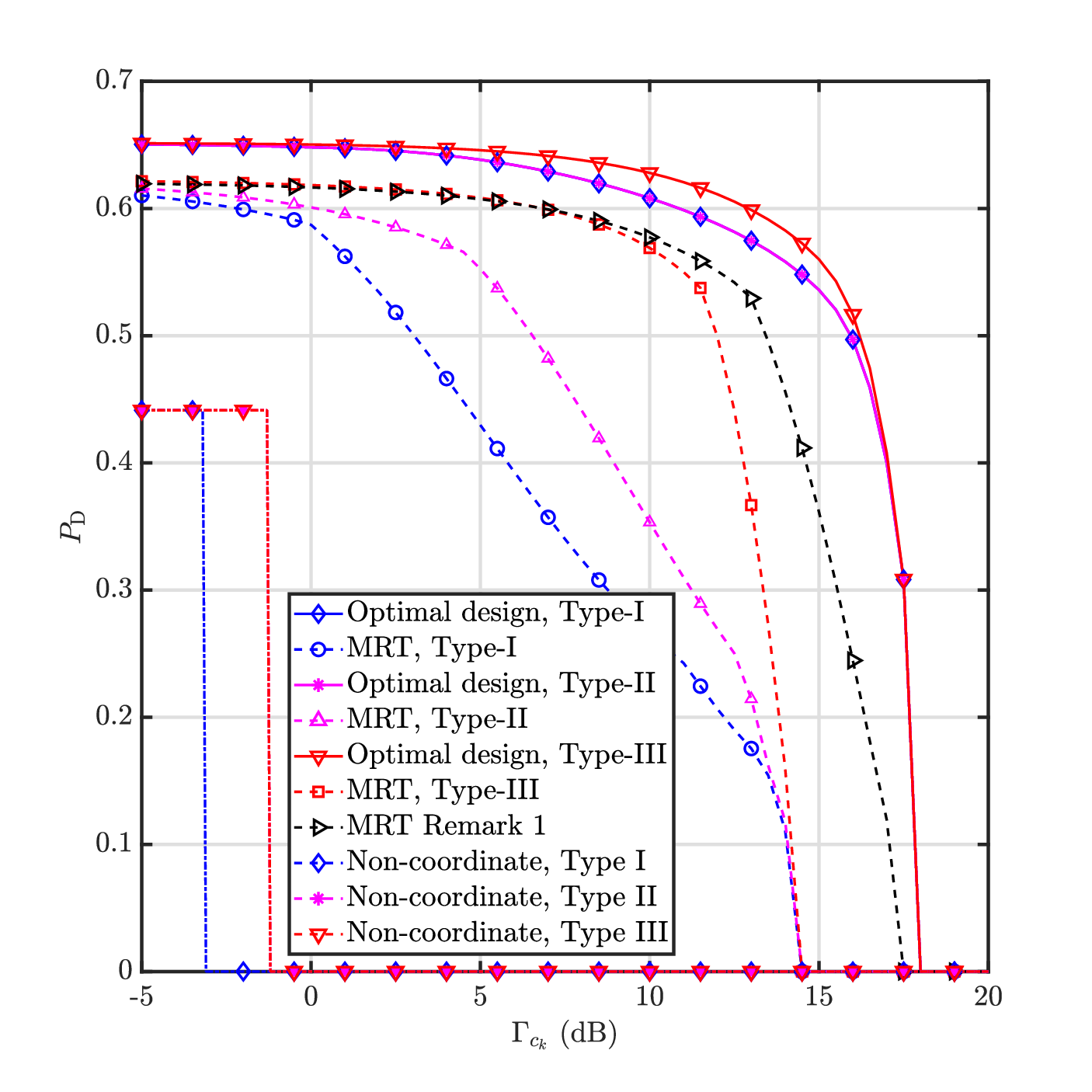}
		\caption{Detection probability \( P_{\rm D} \) versus SINR threshold $\Gamma_{c_k}$ (dB) for different CU types, beamforming designs, where \( \Omega_{e_k} = -30~ \mathrm{dBm} \), \( P_{\rm FA} = 10^{-4} \),  $|\mathcal{A}_{e_k}|=0$,  and Case 3 configuration is adopted.}
		\label{figfigure1}
	\end{figure}

	Fig.~\ref{figlayoutdiscuss} plots the detection probability \(P_{\rm D}\) versus the SINR threshold \(\Gamma_{c_k}\) (dB) with the SDR-based optimal beamforming design for three types of CUs across the spatial cases illustrated in Fig.~\ref{layoutlabel}. As expected, in all cases \(P_{\rm D}\) decreases monotonically with \(\Gamma_{c_k}\), since higher SINR requirements demand greater transmit power for communication, thereby reducing the power available for both energy transfer and sensing. Moreover, as revealed in Corollary~\ref{corollary},  Type-I and Type-II CUs yield identical performance, indicating that the dual-purpose signals already efficiently suppress intra-cell interference and that additional receiver-side cancellation in Type-II CUs provides no further gain. Furthermore, in both Case~1 and Case~2, where the CUs are located far from the sensing area, all CU types achieve nearly identical detection probability over the considered SINR regimes. This outcome indicates that, in such layouts, the proposed transmitter-side optimization sufficiently mitigates interference such that additional receiver-side signal processing only offers little benefit. By contrast, in Case~3, where the CUs are positioned close to the sensing area, Type-III achieves a clear performance advantage, showing that advanced receiver-side interference cancellation becomes beneficial once strong spatial coupling arises between communication and sensing.

	Fig.~\ref{area_effect} illustrates the impact of the ER uncertainty region size $|\mathcal{A}_{e_k}|$  on the sensing performance for different CU types and beamforming designs in Case~3 configuration. As expected, $P_{\rm D}$ decreases as $|\mathcal{A}_{e_k}|$ grows, since a larger uncertainty region forces the BS transmitter to spread energy more broadly to satisfy powering constraints across all possible ER locations, thereby reducing the energy that can be steered toward the intended sensing directions. \textcolor{black}{This trend also reflects the impact of ER location uncertainty on the effective CSI available for beam design, since a larger uncertainty region yields less accurate spatial information for shaping the transmit beams.} Furthermore, the proposed SDR-based optimal design achieves the best performance across all scenarios, benefiting from its joint optimization of beamforming directions and power allocation while explicitly accounting for interference, coupling effects, and spatial uncertainty. In contrast, the MRT-based scheme exhibits lower performance due to its fixed beam directions and limited flexibility in balancing sensing and power transfer. For comparison purposes, we also include the worst\mbox{-}case robust benchmark. When $|\mathcal{A}_{e_k}|=0$, it coincides with the proposed optimal design because the ER position is exactly known and the two formulations are indeed identical. As $|\mathcal{A}_{e_k}|$ increases, its performance degrades significantly since enforcing the harvesting constraint at all discretized locations effectively treats the uncertainty region as multiple possible ER positions that must all be satisfied, which yields a smaller feasible solution set, leading to conservative beam patterns and reduced sensing power.

	Fig.~\ref{figfigure1} plots the detection probability \( P_{\rm D} \) versus the SINR threshold \( \Gamma_{c_k} \) dB for the proposed SDR-based optimal design and the MRT-based sub-optimal design under various CU types. For comparison, a non-coordinated benchmark scheme is also included. Nevertheless, consistent with the distributed sensing system, the total echo power at each sensing point is still obtained by coherently combining the contributions from all BSs. Among the evaluated schemes, the SDR-based design achieves the best performance across all SINR thresholds. This is because SDR jointly optimizes the beamforming directions and power allocation of both the information and dual-purpose signals across BSs, allowing it to satisfy tighter SINR constraints with minimal impact on the energy steered toward the sensing region. In contrast, MRT adopts fixed beam directions aligned with CU channels; thus, as the SINR threshold increases, more transmit power must be diverted to the information beams, leaving less available for sensing. The asymptotic MRT approximation closely matches the actual MRT performance at low SINR thresholds but becomes an upper bound as the threshold increases. At low SINR, the required communication power is small and the assumptions of orthogonal channels and decoupled power allocation hold well. However, as the SINR threshold increases, the approximation overestimates the sensing power since it neglects interference and coupling effects, thereby providing a valid but optimistic upper bound on performance. Finally, both the proposed optimal and sub-optimal designs significantly outperform the non-coordinated benchmark, as inter-BS coordination enables effective mitigation of inter-cell interference, which in turn preserves more transmit power for sensing.

	Fig.~\ref{powereffect} shows the detection probability \( P_{\rm D} \) versus the powering threshold \( \Omega_{e_k} \) dBm for the proposed SDR-based optimal and MRT-based sub-optimal designs, by considering different CU types and a fixed SINR threshold of \( \Gamma_{c_k} = 10\,\mathrm{dB} \).  Firstly, it is observed that as \( \Omega_{e_k} \) increases, the detection probability gradually decreases across all schemes, since more transmit power must be allocated to energy transfer, reducing the power budget available for communication and sensing. Similar to the SINR-limited case in Fig.~\ref{figfigure1}, the proposed SDR-based optimal designs consistently outperform the MRT-based designs, and Type-III CUs achieve the best performance due to their advanced interference cancellation capability. 
	
	\begin{figure}[t]
		\centering
		\includegraphics[width=7cm]{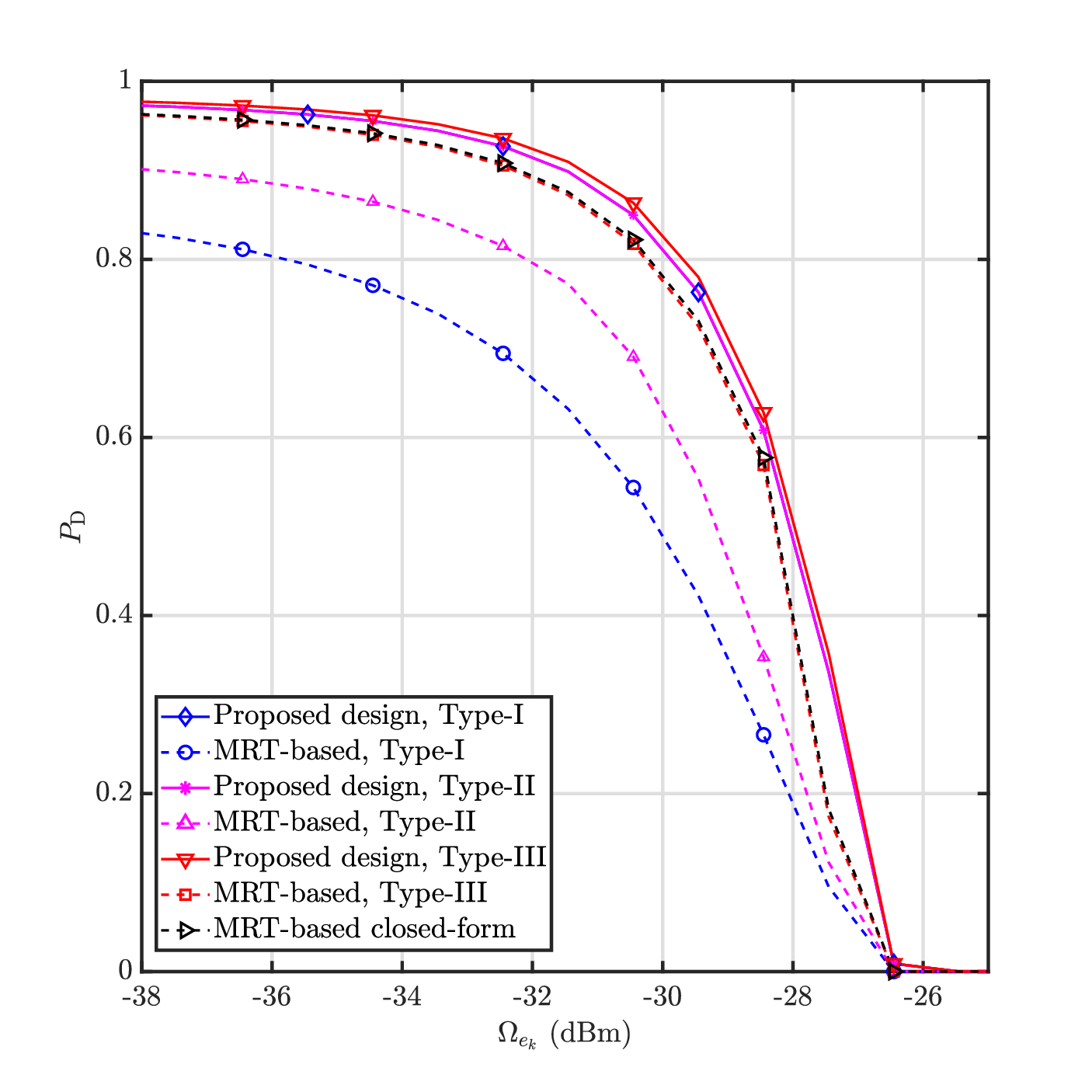}
		\caption{Detection probability \( P_{\rm D} \) versus powering threshold \( \Omega_{e_k} \) dBm for different CU types, beamforming designs, and number of antenna elements, where \( \Gamma_{c_k} = 10\,\mathrm{dB} \), \( |\mathcal{A}_{e_k}| = 0 \), and \( P_{\rm FA} = 10^{-4} \), and Case 3 configuration is adopted.}
		\label{powereffect}
	\end{figure}
	\begin{figure}[t]
			\centering
		\includegraphics[width=7cm]{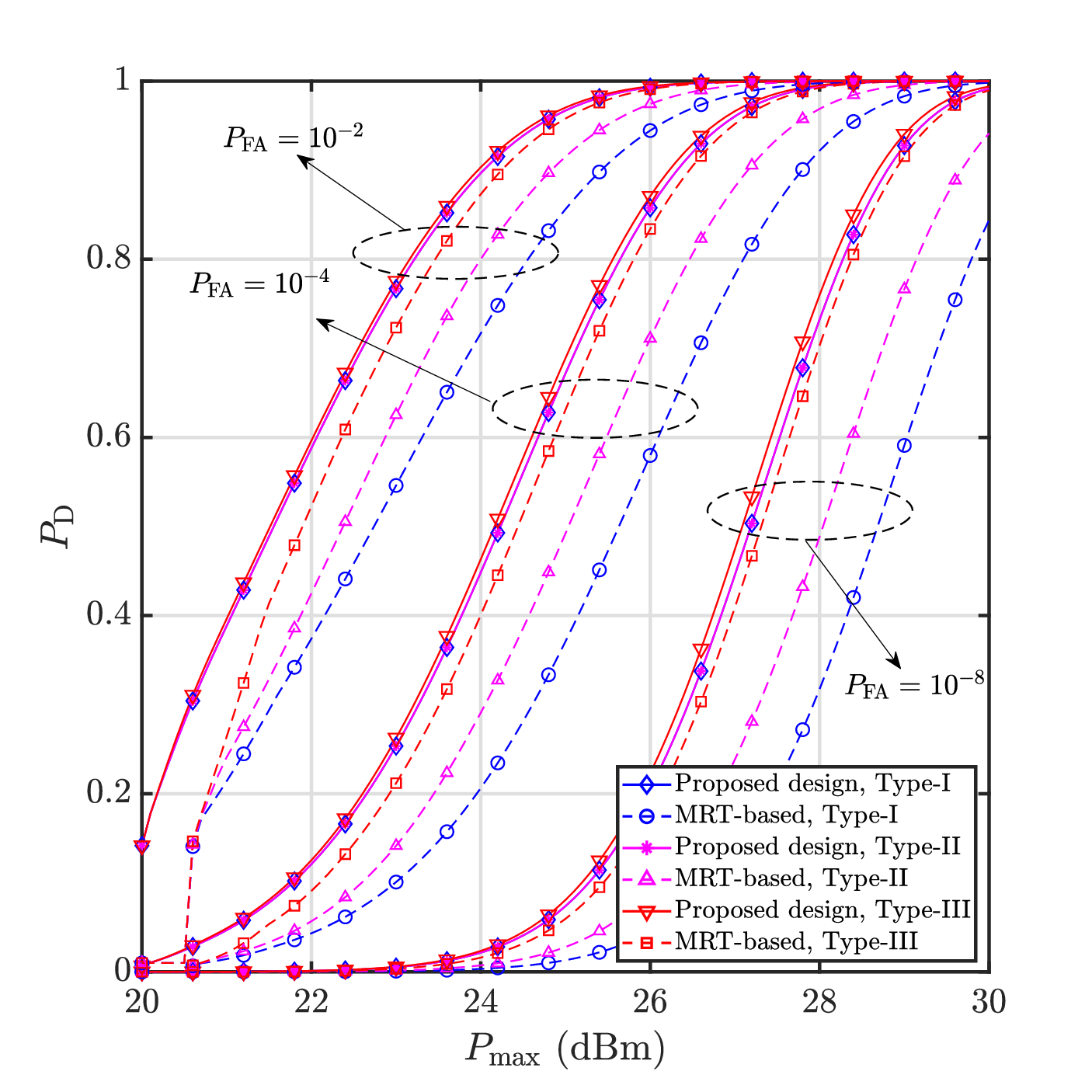}
		\caption{Detection probability \( P_{\rm D} \) versus transmit power budget for different CU types, beamforming designs and false alarm probabilities, where \( \Gamma_{c_k} = 10\,\mathrm{dB} \)  \(\forall c_k \in \mathcal{C} \),  \( \Omega_{e_k} = -36.55\,\mathrm{dBm} \), \( |\mathcal{A}_{e_k}| = 0 \), and Case 3 configuration is adopted. }
		\label{figfigure6}
	\end{figure}

	\begin{figure}[t]
		\centering
		\subfloat[SDR.]{%
			\includegraphics[width=0.5\linewidth]{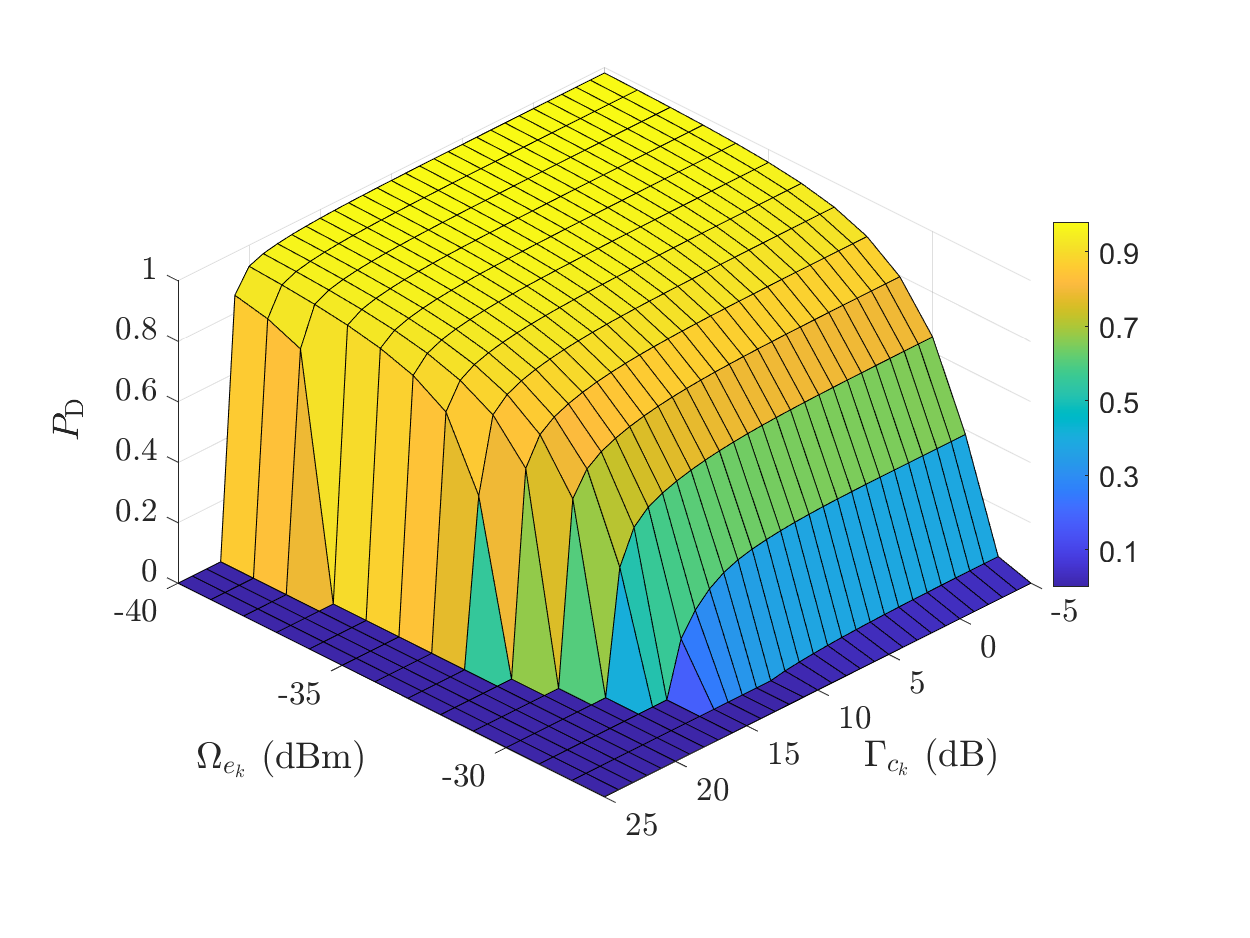}%
			\label{layout_case11}}
		\subfloat[MRT.]{%
			\includegraphics[width=0.5\linewidth]{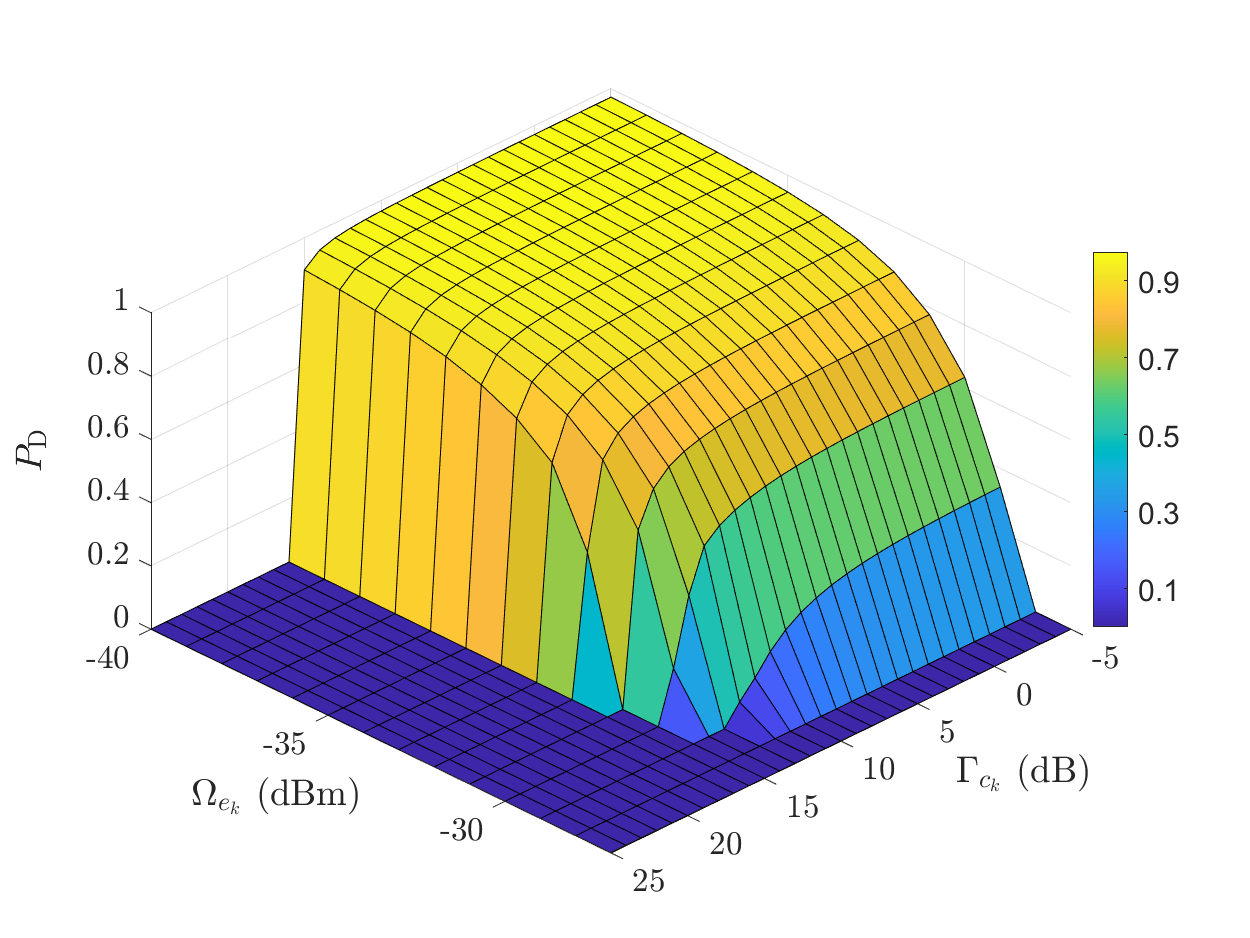}%
			\label{layout_case12}}
		\caption{Detection probability \( P_{\rm D} \) versus powering threshold \( \Omega_{e_k} \) dBm and SINR threshold \( \Gamma_{c_k} \) dB for the proposed SDR- and MRT-based beamforming designs with Type-III CUs, where Case 3 configuration is adopted.}
		\vspace{-5mm}   
		\label{iscaptradeoff}
	\end{figure}
	
	Fig.~\ref{figfigure6} demonstrates the joint impact of the transmit power budget \(P_{\max}\) and the false alarm probability \(P_{\mathrm{FA}}\) on the detection probability \(P_{\rm D}\), across various combinations of beamforming strategies and CU types. Overall, increasing \(P_{\max}\) leads to a consistent improvement in detection accuracy, as higher transmit power allows more energy to be allocated for effective sensing. In contrast, decreasing \(P_{\mathrm{FA}}\) reduces the detection probability, since a stricter false alarm requirement makes it more difficult for the sensing process to distinguish true signals from noise.  Moreover, the MRT-based scheme with Type-III CUs achieves detection performance close to that of SDR when \(P_{\mathrm{FA}}\) is extremely small (e.g., \(10^{-8}\)) and \(P_{\max}\) is high. This is because under such stringent sensing conditions, the SDR design becomes increasingly constrained and its detection performance tends to saturate, while the strong interference suppression capability of Type-III CUs, combined with high transmit power, enables MRT to maintain competitive detection accuracy, thereby narrowing the performance gap.
	
	Fig.~\ref{iscaptradeoff} illustrates the three-dimensional surfaces characterizing the detection probability $P_{\rm D}$ with respect to the powering threshold $\Omega_{e_k}$ dBm and the SINR threshold $\Gamma_{c_k}$ dB under the SDR- and MRT-based beamforming designs. In light of earlier results demonstrating that Type-III CUs consistently achieve the best detection performance, we focus on this configuration to clearly illustrate the trade-offs among different functionalities in near-field ISCAP systems.  It is observed that the detection probability drops more quickly with increasing \( \Omega_{e_k} \) than with \( \Gamma_{c_k} \). This is because satisfying higher powering demands necessitates allocating more transmit power toward the ERs, thereby reducing the portion that can be directed to sensing. On the other hand, SINR constraints can typically be satisfied without severely affecting sensing, since the large antenna array enables high-resolution beamforming in the near-field regime to mitigate the interference. Finally, it is evident from the figures that the SDR-based design achieves superior detection performance to MRT, particularly under high SINR and powering thresholds, owing to its enhanced flexibility in beamforming and power allocation.
	\begin{figure*}[!t]
		\centering
		\subfloat[Received power from BS 1.]{%
			\includegraphics[width=0.33\linewidth]{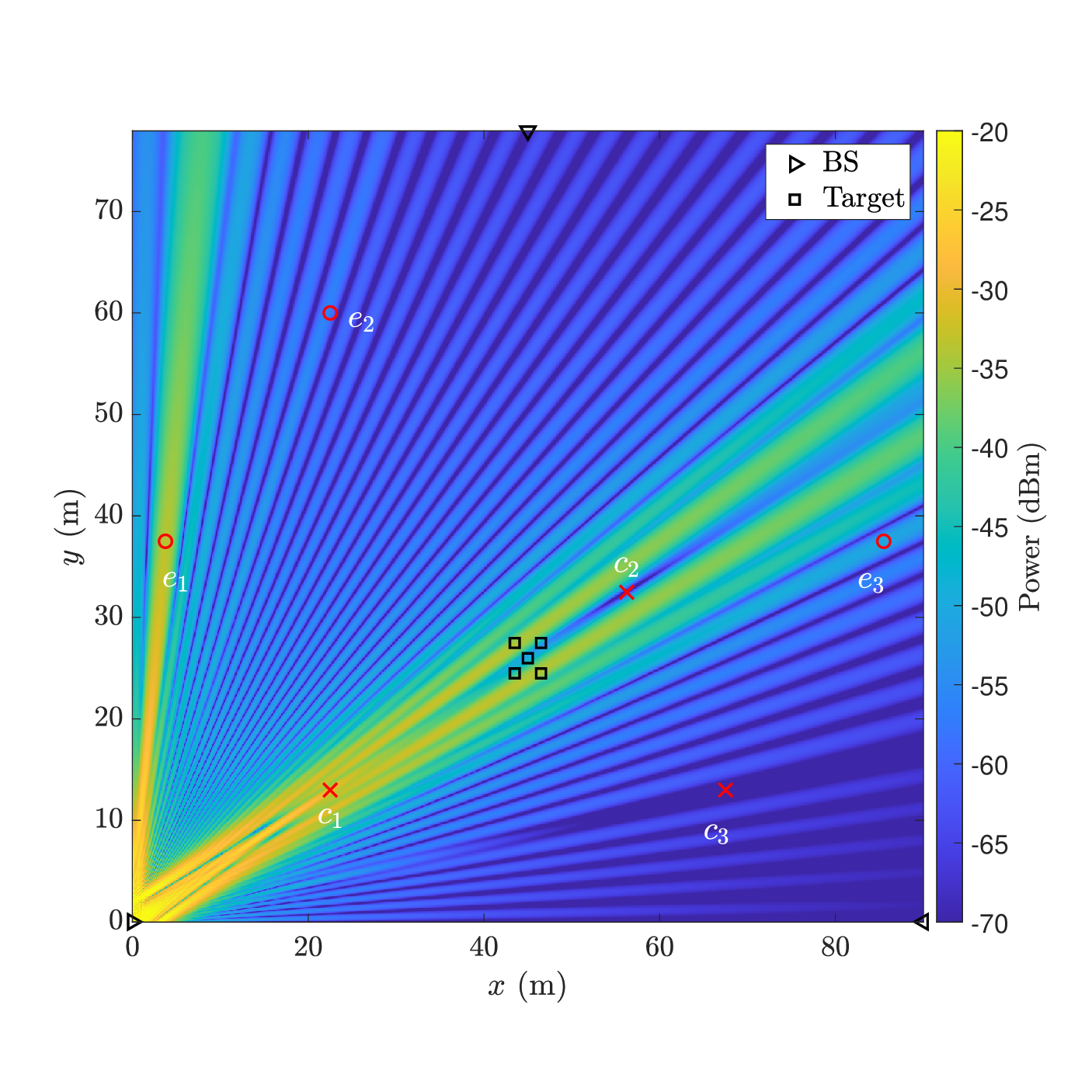}%
			\label{figa}}
		\hfil
		\subfloat[Received power from BS 2.]{%
			\includegraphics[width=0.33\linewidth]{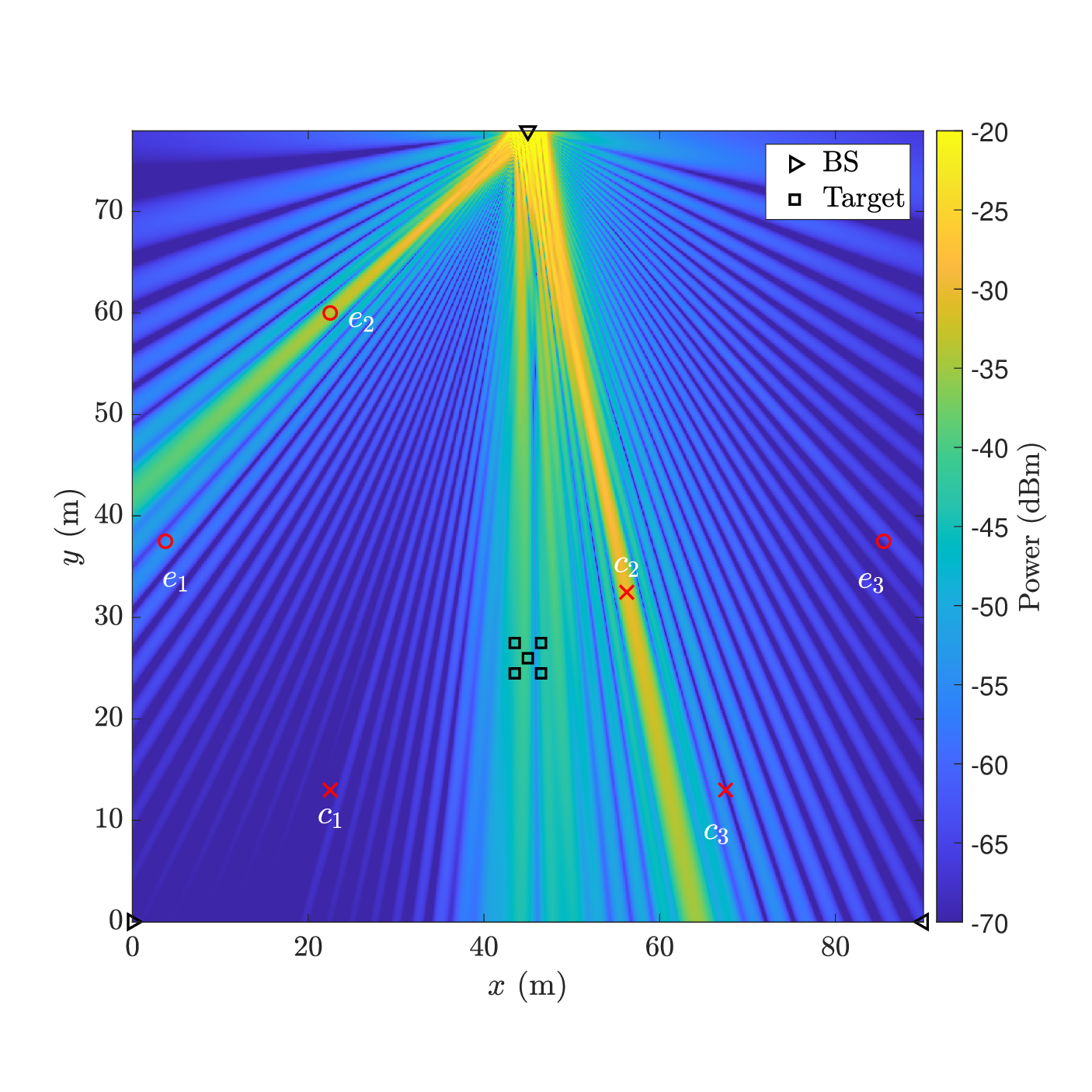}%
			\label{figb}}
		\hfil
		\subfloat[Received power from BS 3.]{%
			\includegraphics[width=0.33\linewidth]{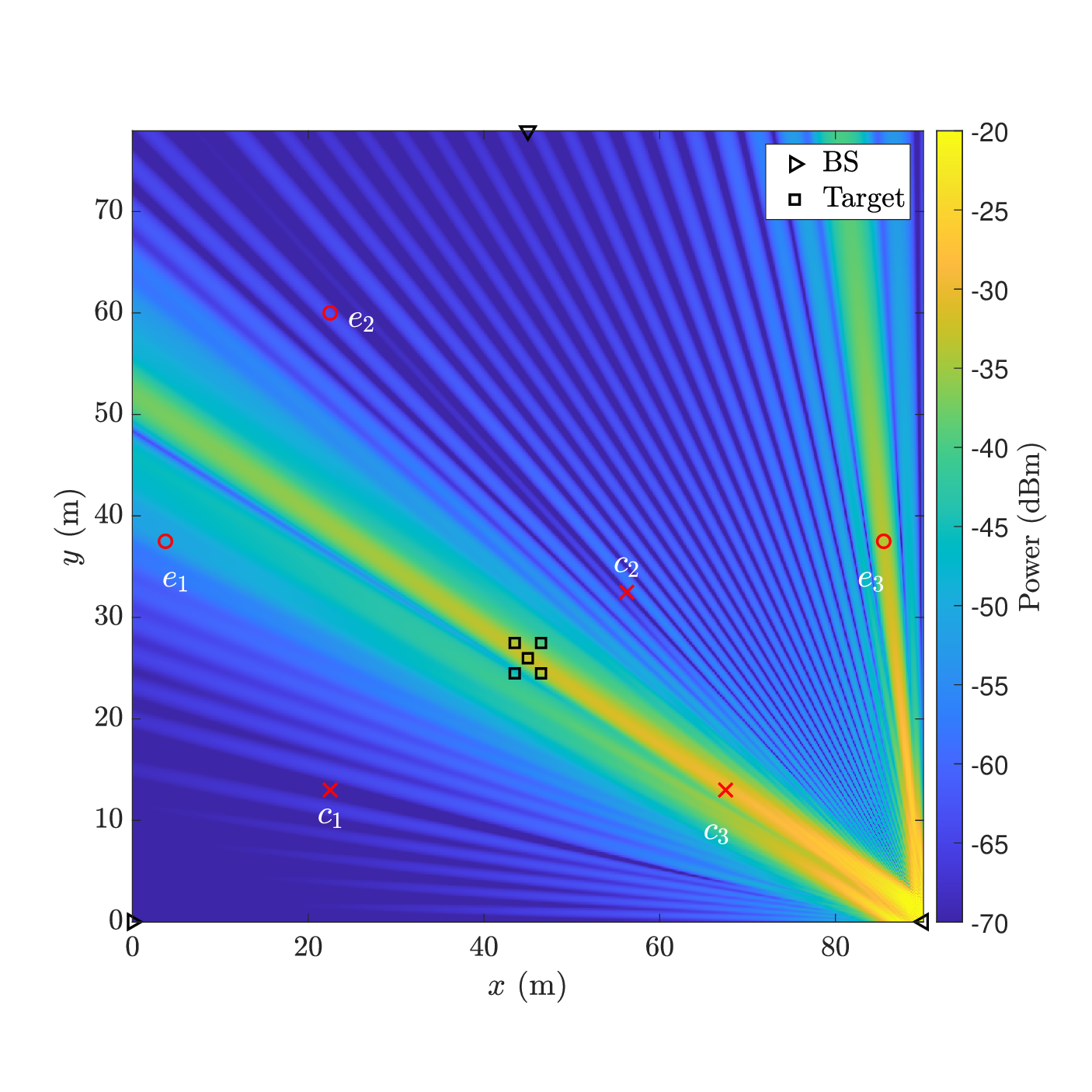}%
			\label{figc}}
		\vspace{-4mm}
		\subfloat[Received power from BS 1.]{%
			\includegraphics[width=0.33\linewidth]{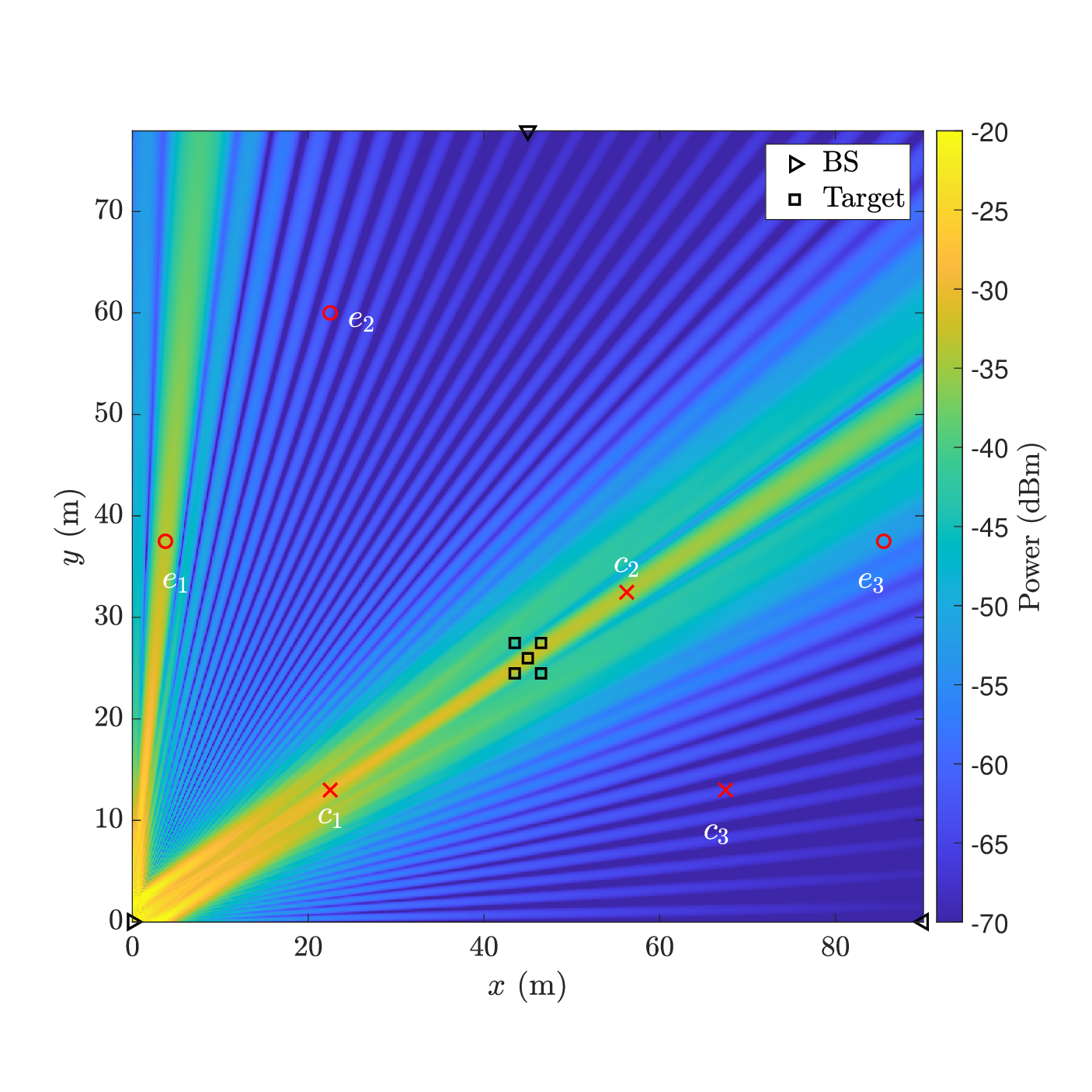}%
			\label{figd}}
		\hfil
		\subfloat[Received power from BS 2.]{%
			\includegraphics[width=0.33\linewidth]{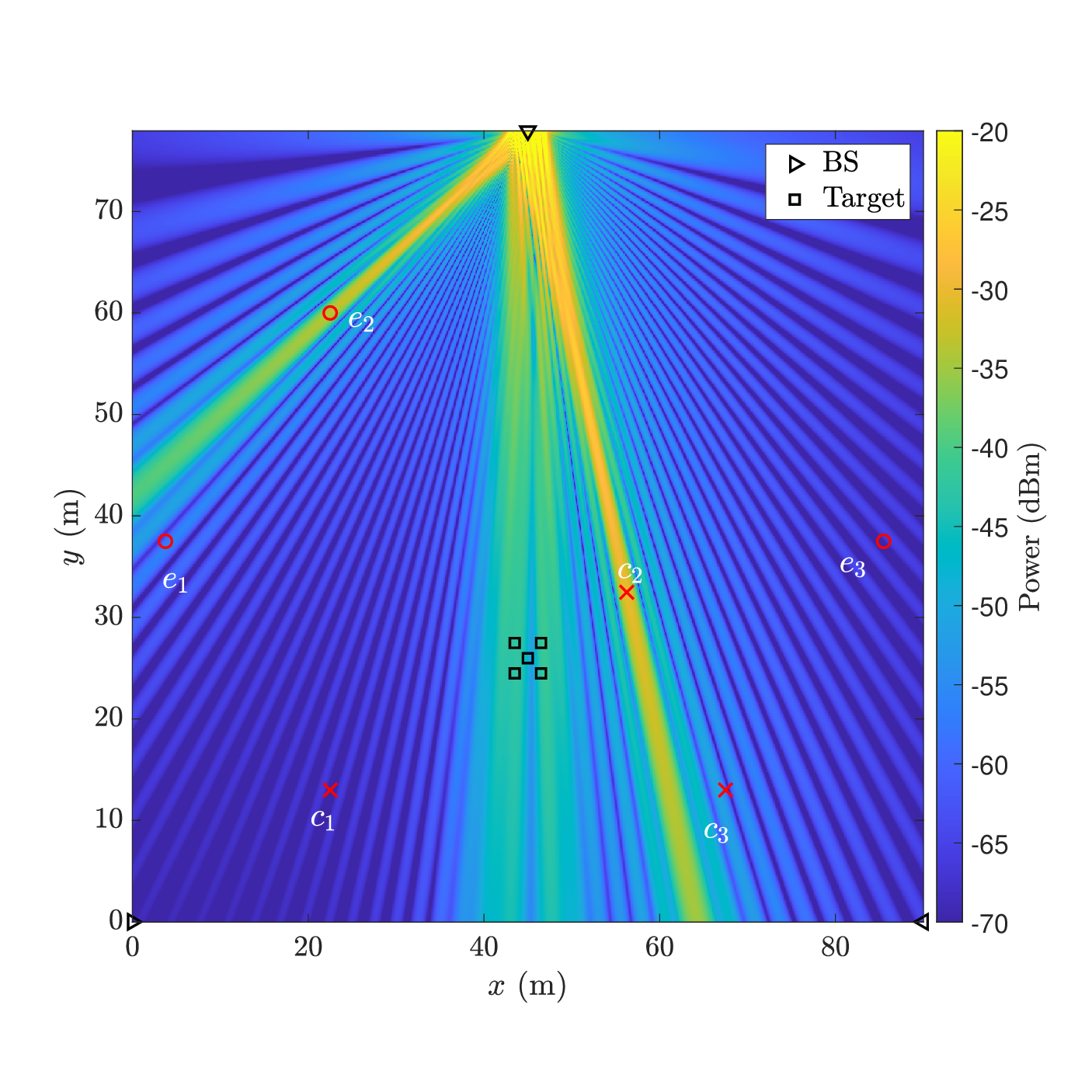}%
			\label{fige}}
		\hfil
		\subfloat[Received power from BS 3.]{%
			\includegraphics[width=0.33\linewidth]{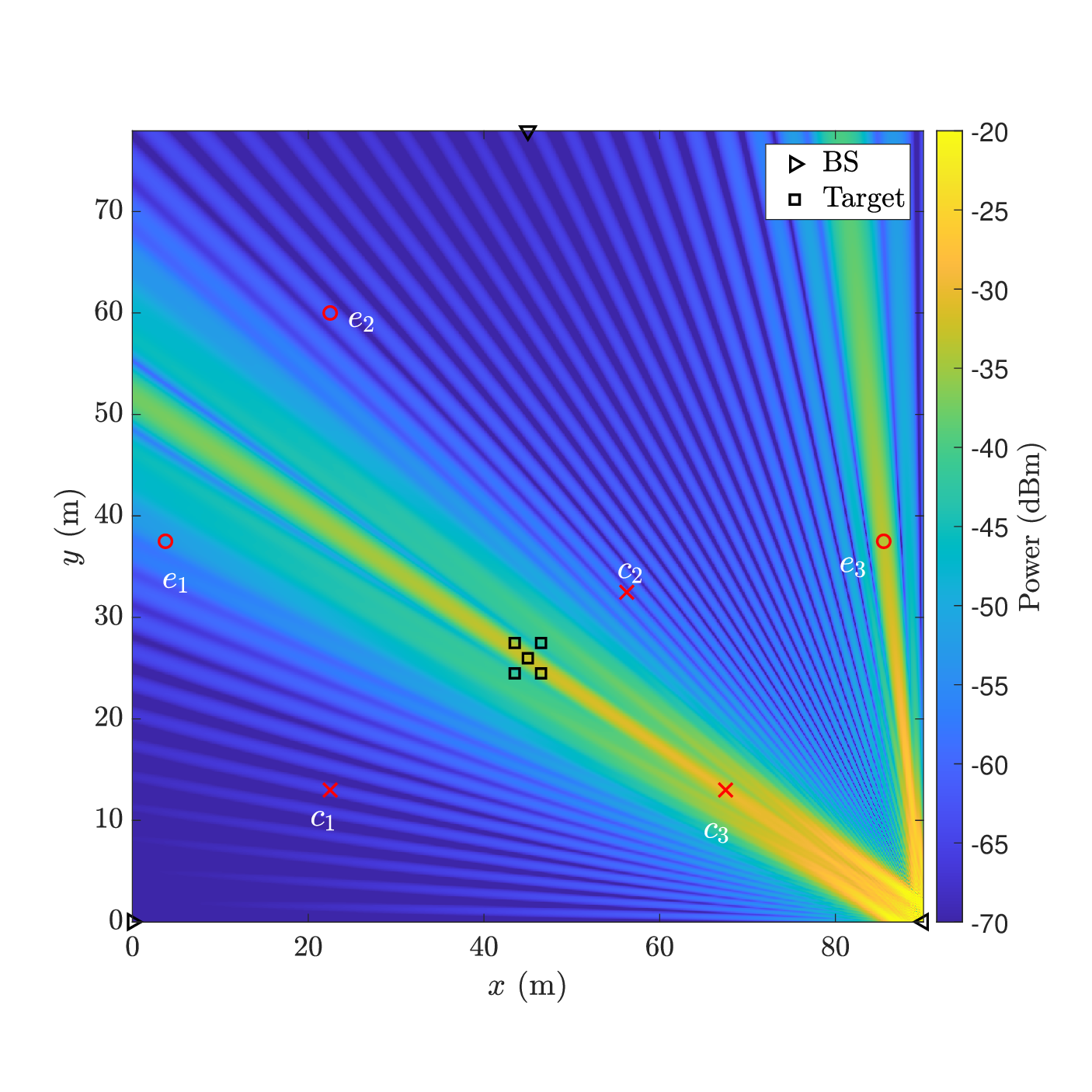}%
			\label{figf}}
		\caption{Received power maps from each BS (triangles) with (a–c) coordinated beamforming and (d–f) non-coordinated beamforming under a Type-I CU configuration, where $\Gamma_{c_k}=20$ dB, $\Omega_{e_k}=-35$ dBm, and $|\mathcal{A}_{e_k}|=0$.}
		\label{powermap}
	\end{figure*}
	\begin{figure*}[!t]
		\centering
		\subfloat[Received power from BS 1.]{%
			\includegraphics[width=0.33\linewidth]{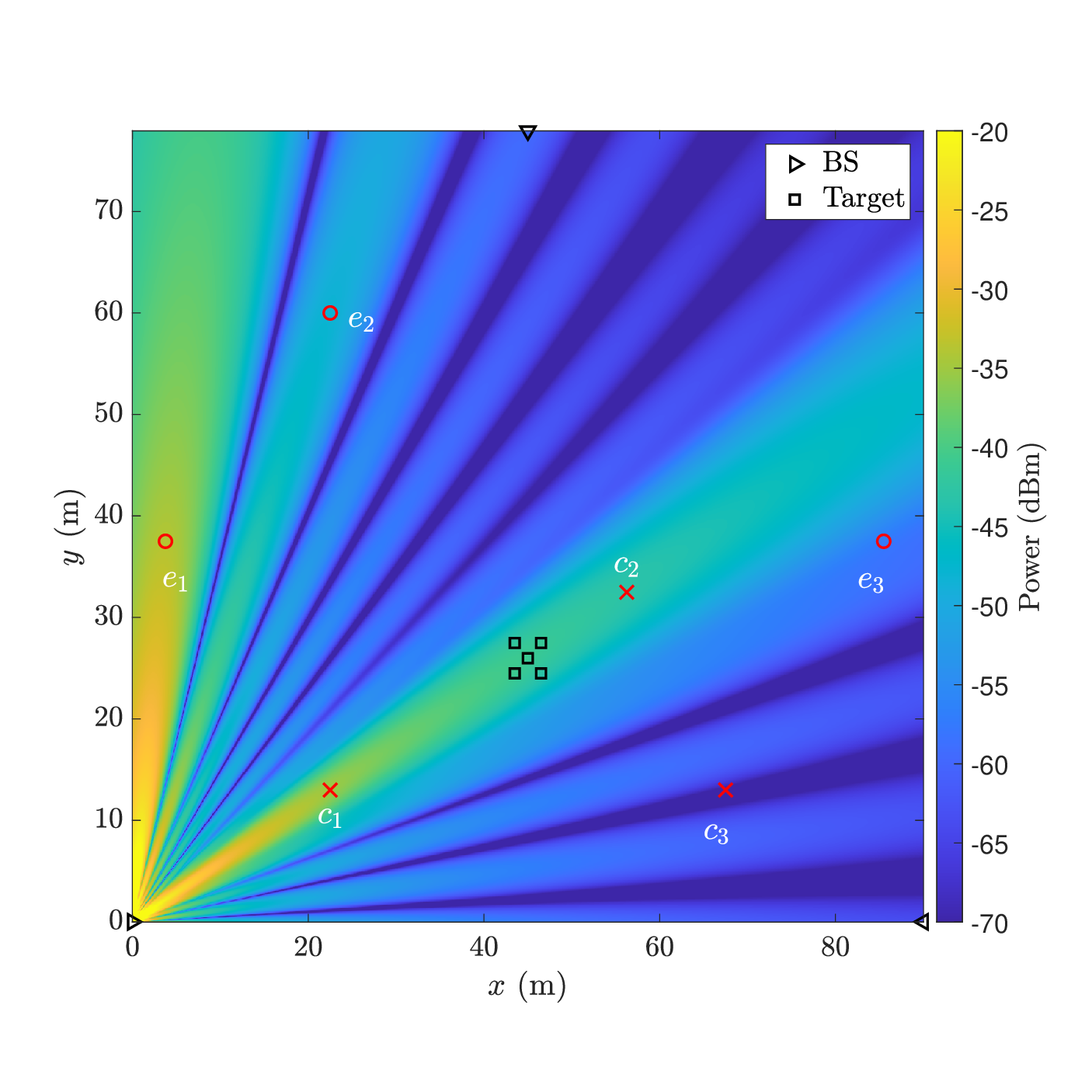}%
			\label{layout_case13}}
		\hfil
		\subfloat[Received power from BS 2.]{%
			\includegraphics[width=0.33\linewidth]{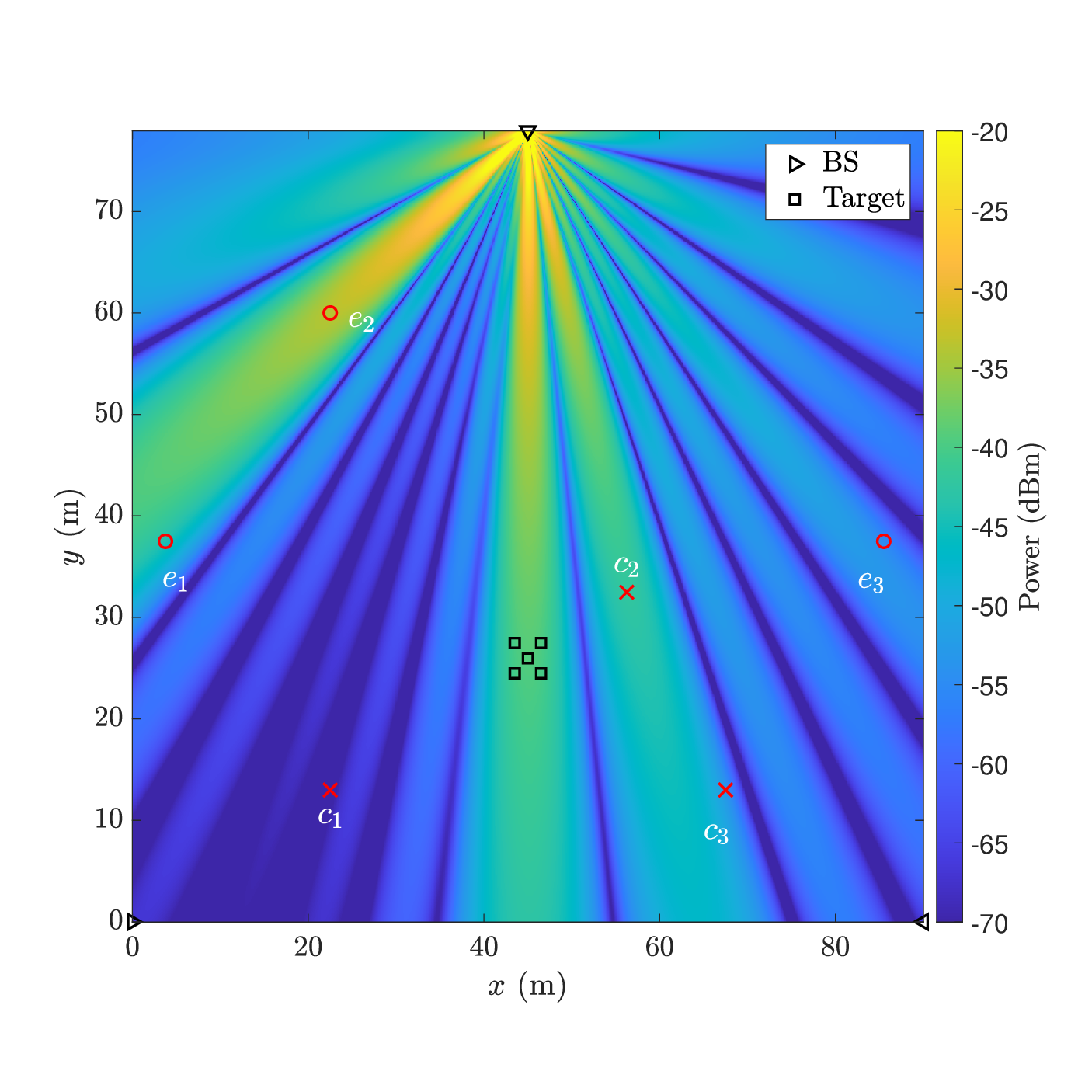}%
			\label{layout_case23}}
		\hfil
		\subfloat[Received power from BS 3.]{%
			\includegraphics[width=0.33\linewidth]{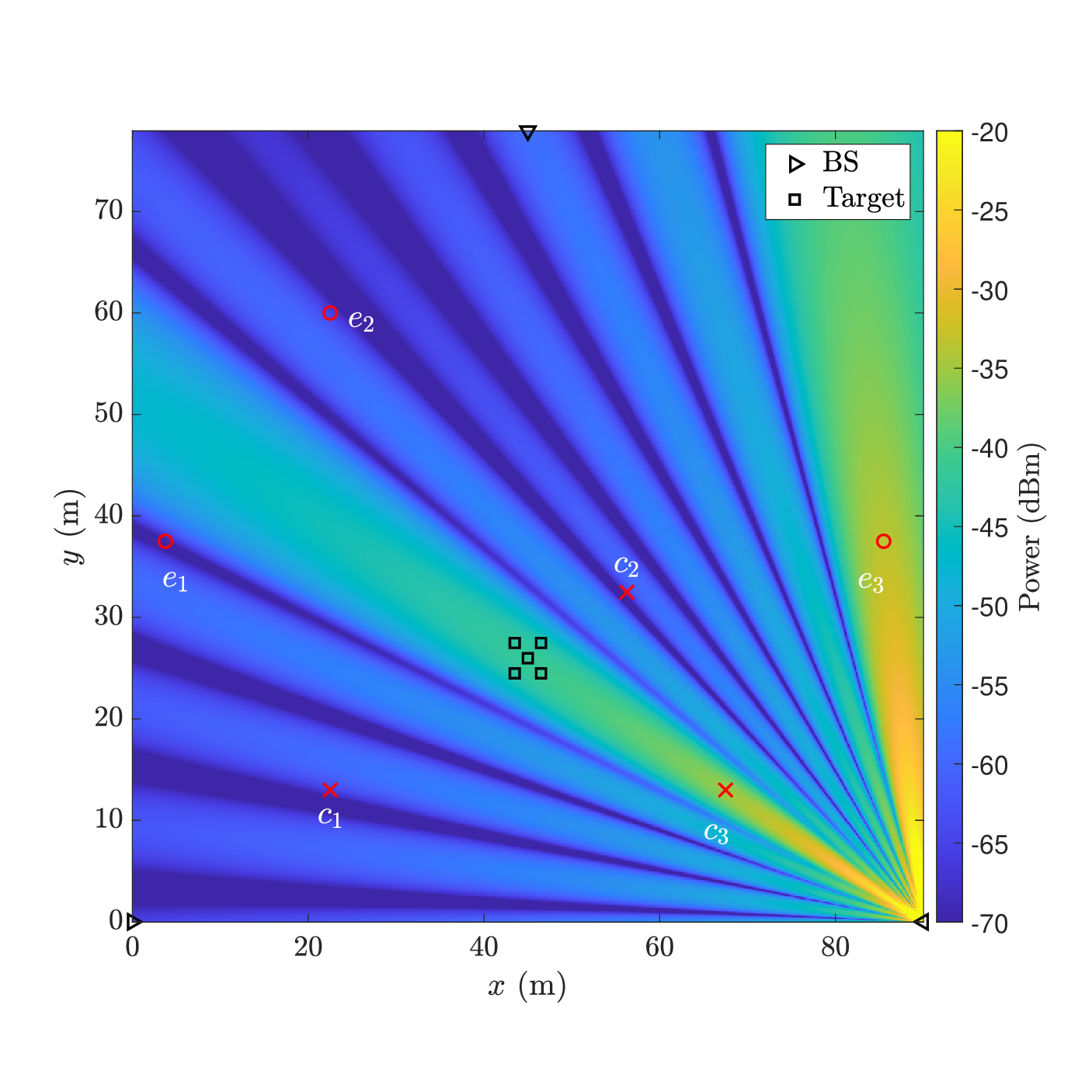}%
			\label{layout_case33}}
		\caption{Received power maps from each BS (triangles) with coordinated beamforming under a Type-I CU configuration, where $N=16$, $\Gamma_{c_k}=0$ dB, $\Omega_{e_k}=-35$ dBm, and $|\mathcal{A}_{e_k}|=0$.}
		\label{farpowermap}
	\end{figure*}
	
	Fig.~\ref{powermap} presents the received power maps from each BS under two configurations, namely, coordinated beamforming in Fig.~\ref{powermap}\subref{figa}-\subref{figc}, and non-coordinated beamforming in Fig.~\ref{powermap}\subref{figd}-\subref{figf}. It is observed that under coordinated beamforming, each BS focuses energy toward its intended CU while forming deep nulls at unintended CU locations. These nulls are clearly visible as dark regions around CU markers in Fig.~\ref{powermap}\subref{figa}-\ref{powermap}\subref{figc}, reflecting effective interference suppression. A particularly notable case occurs at BS~1, where CUs \(c_1\) and \(c_2\) are nearly collinear with the BS. Despite this alignment, BS~1 directs a power focus toward \(c_1\) while simultaneously nulling interference at \(c_2\), as observed in Fig.~\ref{powermap}\subref{figa}. This demonstrates the fine angular and range control achievable through near-field coordination. In contrast, the non-coordinated case in Fig.~\ref{powermap}\subref{figd}-\subref{figf} shows significant power leakage at unintended CU locations, leading to strong inter-cell interference.

	Fig.~\ref{farpowermap} presents the received power distributions from each BS under coordinated transmission in a far-field configuration, where the number of antennas is reduced to \(N = 16\), yielding a Rayleigh distance of approximately \(14\) m. \textcolor{black}{Note that far-field propagation corresponds to the special case of the spherical-wave model when the array aperture is small relative to the link distance (i.e., operation beyond the Rayleigh distance); under this condition, the adopted near-field channel remains valid and naturally approximates the conventional planar-wave representation \cite{liu10716601}.} In this regime, beamforming primarily operates in the angular domain and lacks range resolution, which makes it challenging to spatially separate CUs that lie along similar angular directions. Compared with the near-field case shown in Fig.~\ref{powermap}, the power maps in Fig.~\ref{farpowermap}\subref{layout_case13}--\subref{layout_case33} reveal broader main lobes and a diminished ability to form deep nulls toward unintended CU locations. For instance, although BS~1 attempts to focus energy toward CU~\(c_1\), a substantial portion still leaks toward CU~\(c_2\), even under coordinated design. Similar leakage patterns are observed for BS~2 and BS~3. These results reveal that, despite inter-cell coordination, far-field beamforming remains fundamentally limited in spatial resolution and cannot simultaneously achieve sharp energy focusing and effective interference suppression, especially when multiple CUs appear along nearly the same angular direction from the perspective of the transmitting BSs. \textcolor{black}{In particular, the configuration with $N=16$ highlights the impact of a limited array aperture, where the reduced spatial degrees of freedom constrain the energy-focusing capability and result in noticeable performance degradation relative to larger-array configurations.}

	\section{Conclusion}\label{conlabel}
	This paper investigated the coordinated multi-cell ISCAP in the electromagnetic near field exploiting ELAAs, with a particular focus on the practical challenge of ER location uncertainty. A unified optimization framework was developed to jointly design the information and dual-purpose signals to maximize worst-case sensing performance, subject to SINR requirements at various types of CUs and average power delivery constraints at ERs over uncertain regions. By leveraging SDR, the inherently non-convex problems were transformed into convex form with guaranteed global optimality, while a low-complexity MRT-based design was also proposed.	Numerical results demonstrated that inter-BS coordination is crucial in harnessing inter-cell interference, which in turn preserves more transmit power for sensing and powering. The proposed SDR-based design consistently outperformed the MRT-based and non-coordinated schemes across a wide range of SINR thresholds, power demands, and system settings. The results also reveal fundamental trade-offs among sensing accuracy, communication reliability, and WPT efficiency. Additionally, it was shown that detection probability increases with larger BS power budgets, but is limited by stricter false alarm constraints. Moreover, Type-I and Type-II CUs achieve identical performance, while Type-III CUs lead to extra gains only with strong communication-sensing coupling. The proposed SDR-based optimal scheme also outperforms worst-case robust benchmarks and far-field configurations, highlighting the superior spatial resolution and interference management of near-field ISCAP system. 
	
	\textcolor{black}{Future research may extend the proposed framework to dynamic scenarios involving highly mobile users and rapidly time-varying channels, where learning based adaptive optimization methods, including federated-learning-based distributed optimization and reinforcement learning, may offer improved adaptability and real-time decision-making capabilities. Another promising direction is to explicitly incorporate the effects of imperfect CSI for both CUs and EUs, including estimation errors, feedback delays, and model uncertainty, and to develop robust beamforming strategies that remain reliable under such conditions. Finally, hardware-aware optimization that accounts for a practical transceiver imperfections, finite-resolution phase control, mutual coupling, and calibration constraints, together with hybrid near- and far-field operation, represents an important avenue for further investigation.}

	\appendices

	\section{Proof of Proposition \ref{prop}} \label{proofofprop}   
	Since the objective function, the BS power budget constraint, and the energy harvesting constraint depend only on the total transmit covariance, it can be easily verified that the reconstructed solution achieves the same objective value as \(\mathrm{(SDR1.1)}\), and satisfies the power budget constraint at each BS as well as the energy harvesting constraint at each ER. In addition, the reconstructed matrix \(\widetilde{\bm{\mathcal{W}}}_k = \widetilde{\bm{\omega}}_k \widetilde{\bm{\omega}}_k^H\) is rank-one for each \(k \in \{1,\dotsc,K\}\), since it is the outer product of a single vector. 
	
	We now show that the reconstructed matrices \(\{\widetilde{\bm{\mathcal{W}}}_k\}\) also satisfy the SINR constraints. From the definition of \(\widetilde{\bm{\omega}}_k\), we have $
	\bm{h}_{l,c_k}^H \widetilde{\bm{\mathcal{W}}}_l \bm{h}_{l,c_k}
	= \bm{h}_{l,c_k}^H \widetilde{\bm{\omega}}_l \widetilde{\bm{\omega}}_l^H \bm{h}_{l,c_k}
	= \bm{h}_{l,c_k}^H \bm{\mathcal{W}}_l^{\mathrm{I}} \bm{h}_{l,c_k},\quad \forall l \in \mathcal{K}$.    Thus, the SINR constraint for CU \(c_k\) becomes
	\begin{align}
		&\left(1 + \frac{1}{\Gamma_{c_k}}\right) \bm{h}_{k,c_k}^H \widetilde{\bm{\mathcal{W}}}_k \bm{h}_{k,c_k}
		= \left(1 + \frac{1}{\Gamma_{c_k}}\right) \bm{h}_{k,c_k}^H \bm{\mathcal{W}}_k^{\mathrm{I}} \bm{h}_{k,c_k} \nonumber \\
		\ge& \sum\nolimits_{l=1}^K \bm{h}_{l,c_k}^H \left( \bm{\mathcal{W}}_l^{\mathrm{I}} + \bm{R}_{\mathrm{o},l}^{\mathrm{I}} \right) \bm{h}_{l,c_k} + \sigma_{\mathrm{c}}^2 \nonumber \\
		=& \sum\nolimits_{l=1}^K \bm{h}_{l,c_k}^H \left( \widetilde{\bm{\mathcal{W}}}_l + \widetilde{\bm{R}}_{\mathrm{o},l} \right) \bm{h}_{l,c_k} + \sigma_{\mathrm{c}}^2.
	\end{align}
	Finally, we verify that \(\widetilde{\bm{R}}_{\mathrm{o},k} = \bm{\mathcal{W}}_k^{\mathrm{I}} + \bm{R}_{\mathrm{o},k}^{\mathrm{I}} - \widetilde{\bm{\mathcal{W}}}_k\) is positive semidefinite. Since both \(\bm{\mathcal{W}}_k^{\mathrm{I}}\) and \(\bm{R}_{\mathrm{o},k}^{\mathrm{I}}\) are positive semidefinite, their sum is also positive semidefinite. Moreover, \(\widetilde{\bm{\mathcal{W}}}_k\) is constructed from \(\bm{\mathcal{W}}_k^{\mathrm{I}}\) as a rank-one projection that satisfies \(\widetilde{\bm{\mathcal{W}}}_k \preceq \bm{\mathcal{W}}_k^{\mathrm{I}}\). Therefore, subtracting \(\widetilde{\bm{\mathcal{W}}}_k\) from the sum still leads to a positive semidefinite matrix, implying that \(\widetilde{\bm{R}}_{\mathrm{o},k} \succeq \bm{0}\).
	
	Hence, the reconstructed solution \(\{\widetilde{\bm{\mathcal{W}}}_k\}\) and \(\{\widetilde{\bm{R}}_{\mathrm{o},k}\}\) are feasible for problem \((\mathrm{P1.1})\), satisfying all constraints and preserving optimality. This completes the proof of Proposition~\ref{prop}.

	\section{Proof of Corollary \ref{corollary}}\label{corollaryproof}
	Let $\{\bm{\mathcal{W}}_k^{\rm I}, \bm{R}_{\mathrm{o},k}^{\rm I}\}$,  $\forall\,k\in\{1,...,K\}$ denote an optimal solution to \textnormal{(SDR1.1)}, achieving the objective value $\Theta^{\rm I}$. For the sake of contradiction, we suppose that there exists at least one BS $k$ such that $\bm{R}_{\mathrm{o},k}^{\rm I}$ is not orthogonal to the CU channel vector $\bm{h}_{k,c_k}$. Define the unit vector $\bm{u}_k = \bm{h}_{k,c_k} / \|\bm{h}_{k,c_k}\|$. Since $\bm{R}_{\mathrm{o},k}^{\rm I} \succeq \bm{0}$, we write $\bm{R}_{\mathrm{o},k}^{\rm I} = \bm{R}_{\perp,k} + \alpha_k \bm{u}_k \bm{u}_k^H$, where $\bm{R}_{\perp,k} \bm{u}_k = \bm{0}$ and $\alpha_k = \bm{u}_k^H \bm{R}_{\mathrm{o},k}^{\rm I} \bm{u}_k > 0$.
	
	Construct an alternative solution by defining $\widehat{\bm{\mathcal{W}}}_k = \bm{\mathcal{W}}_k^{\rm I} + \alpha_k \bm{u}_k \bm{u}_k^H$ and $\widehat{\bm{R}}_{\mathrm{o},k} = \bm{R}_{\perp,k}$. The total transmit covariance remains unchanged, i.e., $\widehat{\bm{\mathcal{W}}}_k + \widehat{\bm{R}}_{\mathrm{o},k} = \bm{\mathcal{W}}_k^{\rm I} + \bm{R}_{\mathrm{o},k}^{\rm I}$, so all original constraints are satisfied except the SINR constraint at CU $c_k$. The interference term from $\bm{R}_{\mathrm{o},k}^{\rm I}$ is eliminated and the desired signal power increases due to the added rank-one component in $\widehat{\bm{\mathcal{W}}}_k$ aligned with $\bm{h}_{k,c_k}$. Since interference from other BSs remains unchanged, the SINR at CU $c_k$ strictly increases. This implies that the SINR constraint becomes inactive under the constructed solution, contradicting the optimality of the original one, where all constraints must be active at optimum. Therefore, $\alpha_k = 0$ must hold for all CUs, implying that $\bm{h}_{k,c_k}^H \bm{R}_{\mathrm{o},k}^{\rm I} \bm{h}_{k,c_k} = 0$. At optimality, Type-I SINR constraints thus reduce to the Type-II counterparts, and since all remaining constraints are identical and active, it follows that $\Theta^{\rm I} = \Theta^{\rm II}$.

	\bibliographystyle{IEEEtran}
	\bibliography{ref}

% Generated by IEEEtran.bst, version: 1.14 (2015/08/26)
\begin{thebibliography}{10}
\providecommand{\url}[1]{#1}
\csname url@samestyle\endcsname
\providecommand{\newblock}{\relax}
\providecommand{\bibinfo}[2]{#2}
\providecommand{\BIBentrySTDinterwordspacing}{\spaceskip=0pt\relax}
\providecommand{\BIBentryALTinterwordstretchfactor}{4}
\providecommand{\BIBentryALTinterwordspacing}{\spaceskip=\fontdimen2\font plus
\BIBentryALTinterwordstretchfactor\fontdimen3\font minus
  \fontdimen4\font\relax}
\providecommand{\BIBforeignlanguage}[2]{{%
\expandafter\ifx\csname l@#1\endcsname\relax
\typeout{** WARNING: IEEEtran.bst: No hyphenation pattern has been}%
\typeout{** loaded for the language `#1'. Using the pattern for}%
\typeout{** the default language instead.}%
\else
\language=\csname l@#1\endcsname
\fi
#2}}
\providecommand{\BIBdecl}{\relax}
\BIBdecl

\bibitem{recommendation2023framework}
\BIBentryALTinterwordspacing
{ITU-R}, ``Framework and overall objectives of the future development of {IMT}
  for 2030 and beyond,'' International Telecommunication Union, Recommendation
  ITU-R M.2160-0, Mar. 2023. [Online]. Available:
  \url{https://www.itu.int/rec/R-REC-M.2160-0-202303-I/en}
\BIBentrySTDinterwordspacing

\bibitem{yilongiscap}
Y.~Chen, H.~Hua, J.~Xu, and D.~W.~K. Ng, ``{ISAC} meets {SWIPT}:
  Multi-functional wireless systems integrating sensing, communication, and
  powering,'' \emph{IEEE Trans. Wireless Commun.}, vol.~23, no.~8, pp.
  8264--8280, Aug. 2024.

\bibitem{yilongofdm}
Y.~Chen, C.~Hu, Z.~Ren, H.~Hu, J.~Xu, L.~Xu, L.~Liu, and S.~Cui, ``Integrated
  sensing, communication, and powering over multi-antenna {OFDM} systems,''
  \emph{IEEE Trans. Wireless Commun.}, pp. 1--1, Apr. 2025.

\bibitem{aref2025otsm}
A.~D. Aref, X.~Zhu, M.~Wen, C.~Masouros, and I.~Krikidis, ``{OTSM} with
  delay-doppler alignment modulation meets mm{W}ave m{MIMO} {ISCAP}: Waveform
  optimization by deep reinforcement learning,'' \emph{IEEE Internet Things
  J.}, Sep. 2025.

\bibitem{ren2025robust}
Z.~Ren, S.~Zhang, L.~Qiu, D.~W.~K. Ng, and J.~Xu, ``Robust secure
  communications in near-field {ISCAP} systems with extremely large-scale
  antenna array,'' \emph{arXiv preprint arXiv:2505.15279}, 2025.

\bibitem{liu2022integrated}
F.~Liu, Y.~Cui, C.~Masouros, J.~Xu, T.~X. Han, Y.~C. Eldar, and S.~Buzzi,
  ``Integrated sensing and communications: Toward dual-functional wireless
  networks for 6{G} and beyond,'' \emph{IEEE J. Sel. Areas Commun.}, vol.~40,
  no.~6, pp. 1728--1767, Jun. 2022.

\bibitem{hua2023optimal}
H.~Hua, J.~Xu, and T.~X. Han, ``Optimal transmit beamforming for integrated
  sensing and communication,'' \emph{IEEE Trans. Veh. Technol.}, vol.~72,
  no.~8, pp. 10\,588--10\,603, Aug. 2023.

\bibitem{zhang9540344}
J.~A. Zhang, F.~Liu, C.~Masouros, R.~W. Heath, Z.~Feng, L.~Zheng, and
  A.~Petropulu, ``An overview of signal processing techniques for joint
  communication and radar sensing,'' \emph{IEEE J. Sel. Top. Signal Process.},
  vol.~15, no.~6, pp. 1295--1315, Nov. 2021.

\bibitem{clerckx2018fundamentals}
B.~Clerckx, R.~Zhang, R.~Schober, D.~W.~K. Ng, D.~I. Kim, and H.~V. Poor,
  ``Fundamentals of wireless information and power transfer: From rf energy
  harvester models to signal and system designs,'' \emph{IEEE J. Sel. Areas
  Commun.}, vol.~37, no.~1, pp. 4--33, Jan. 2018.

\bibitem{ng2013wireless}
D.~W.~K. Ng, E.~S. Lo, and R.~Schober, ``Wireless information and power
  transfer: Energy efficiency optimization in {OFDMA} systems,'' \emph{IEEE
  Trans. Wireless Commun.}, vol.~12, no.~12, pp. 6352--6370, Dec. 2013.

\bibitem{xu2014multiuser}
J.~Xu, L.~Liu, and R.~Zhang, ``Multiuser {MISO} beamforming for simultaneous
  wireless information and power transfer,'' \emph{IEEE Trans. Signal
  Process.}, vol.~62, no.~18, pp. 4798--4810, Sep. 2014.

\bibitem{zhou2024integrating}
Z.~Zhou, X.~Li, G.~Zhu, J.~Xu, K.~Huang, and S.~Cui, ``Integrating sensing,
  communication, and power transfer: Multiuser beamforming design,'' \emph{IEEE
  J. Sel. Areas Commun}, vol.~42, no.~9, pp. 2228--2242, Jun. 2024.

\bibitem{hao2024energy}
Z.~Hao, Y.~Fang, X.~Yu, J.~Xu, L.~Qiu, L.~Xu, and S.~Cui, ``Energy-efficient
  hybrid beamforming with dynamic on-off control for integrated sensing,
  communications, and powering,'' \emph{IEEE Trans. Commun.}, Mar. 2024.

\bibitem{liu10716601}
Y.~Liu, C.~Ouyang, Z.~Wang, J.~Xu, X.~Mu, and A.~L. Swindlehurst,
  ``Near-{F}ield communications: A comprehensive survey,'' \emph{IEEE Commun.
  Surv. Tutor.}, pp. 1--1, Oct. 2024.

\bibitem{liu2023near}
Y.~Liu, Z.~Wang, J.~Xu, C.~Ouyang, X.~Mu, and R.~Schober, ``Near-field
  communications: A tutorial review,'' \emph{IEEE Open J. Commun. Soc.},
  vol.~4, pp. 1999--2049, Aug. 2023.

\bibitem{liu10944643}
Y.~Liu, J.~Xu, Z.~Wang, X.~Mu, and L.~Hanzo, ``Near-field communications: What
  will be different?'' \emph{IEEE Wirel. Commun.}, vol.~32, no.~2, pp.
  262--270, Apr. 2025.

\bibitem{wang2023near}
Z.~Wang, X.~Mu, and Y.~Liu, ``Near-field integrated sensing and
  communications,'' \emph{IEEE Commun. Lett.}, vol.~27, no.~8, pp. 2048--2052,
  Aug. 2023.

\bibitem{Ouyang10639537}
C.~Ouyang, Z.~Wang, Y.~Chen, X.~Mu, and P.~Zhu, ``A primer on near-field
  communications for next-generation multiple access,'' \emph{Proc. IEEE}, vol.
  112, no.~9, pp. 1527--1565, Sep. 2024.

\bibitem{ZhaoNear2024}
B.~Zhao, C.~Ouyang, Y.~Liu, X.~Zhang, and H.~V. Poor, ``Modeling and analysis
  of near-field {ISAC},'' \emph{IEEE J. Sel. Top. Signal Process.}, vol.~18,
  no.~4, pp. 678--693, May 2024.

\bibitem{Hua11030222}
H.~Hua, J.~Xu, and R.~Zhang, ``Near-field integrated sensing and communication
  with extremely large-scale antenna array,'' \emph{IEEE Trans. Wireless
  Commun.}, pp. 1--1, 2025.

\bibitem{LWC2022}
Y.~Zhang, X.~Wu, and C.~You, ``Fast near-field beam training for extremely
  large-scale array,'' \emph{IEEE Wireless Commun. Lett.}, vol.~11, no.~12, pp.
  2625--2629, Dec. 2022.

\bibitem{demarchou2022energy}
E.~Demarchou, C.~Psomas, and I.~Krikidis, ``Energy focusing for wireless power
  transfer in the near-field region,'' in \emph{Proceedings of the 2022 IEEE
  Global Communications Conference (GLOBECOM)}.\hskip 1em plus 0.5em minus
  0.4em\relax IEEE, 2022, pp. 4106--4110.

\bibitem{JSAC2024}
Y.~Zhang and C.~You, ``{SWIPT} in mixed near- and far-field channels: Joint
  beam scheduling and power allocation,'' \emph{IEEE J. Sel. Areas Commun.},
  vol.~42, no.~6, pp. 1583--1597, Jun. 2024.

\bibitem{zhang2024simultaneous}
Z.~Zhang, Y.~Liu, Z.~Wang, X.~Mu, and J.~Chen, ``Simultaneous wireless
  information and power transfer in near-field communications,'' \emph{IEEE
  Internet Things J.}, May 2024.

\bibitem{cong2024near}
J.~Cong, C.~You, J.~Li, L.~Chen, B.~Zheng, Y.~Liu, W.~Wu, Y.~Gong, S.~Jin, and
  R.~Zhang, ``Near-field integrated sensing and communication: Opportunities
  and challenges,'' \emph{IEEE Wireless. Commun.}, Dec. 2024.

\bibitem{xu2025toward}
D.~Xu, X.~Su, S.~Tarkoma, and P.~Hui, ``Toward sustainable 6{G} leveraging
  digital twin and artificial intelligence: Framework and case study,''
  \emph{IEEE Commun. Mag.}, Mar. 2025.

\bibitem{xie2024near}
M.~Xie, X.~Yu, X.~Dang, B.-H. Soong, and C.~Yuen, ``Near-field cell-free
  massive {MIMO} with channel aging: A modified {L}o{S} model,'' \emph{IEEE
  Wirel. Commun. Lett.}, Dec. 2024.

\bibitem{PA10892231}
D.~Pérez-Adán, J.~P. González-Coma, F.~Javier López-Martínez, and
  L.~Castedo, ``Interference-aware precoding and user selection for multi-cell
  near-field {XL-MIMO} systems,'' \emph{IEEE Wirel. Commun. Lett.}, vol.~14,
  no.~5, pp. 1396--1400, May 2025.

\bibitem{yuan2025reconfigurable}
Y.~Yuan, Y.~Huang, X.~Su, B.~Duan, N.~Hu, and M.~Di~Renzo, ``Reconfigurable
  intelligent surface system level simulations for industry standards,''
  \emph{IEEE Commun. Mag.}, Mar. 2025.

\bibitem{Li9184896}
Y.~Li, Y.~Zhuang, X.~Hu, Z.~Gao, J.~Hu, L.~Chen, Z.~He, L.~Pei, K.~Chen,
  M.~Wang, X.~Niu, R.~Chen, J.~Thompson, F.~M. Ghannouchi, and N.~El-Sheimy,
  ``Toward location-enabled {IoT (LE-IoT)}: {IoT} positioning techniques, error
  sources, and error mitigation,'' \emph{IEEE Internet Things J.}, vol.~8,
  no.~6, pp. 4035--4062, Mar. 2021.

\bibitem{aziz2019battery}
A.~A. Aziz, L.~Ginting, D.~Setiawan, J.~H. Park, N.~M. Tran, G.~Y. Yeon, D.~I.
  Kim, and K.~W. Choi, ``Battery-less location tracking for {I}nternet of
  {T}hings: Simultaneous wireless power transfer and positioning,'' \emph{IEEE
  Internet Things J.}, vol.~6, no.~5, pp. 9147--9164, Oct. 2019.

\bibitem{cheng2024optimal}
G.~Cheng, Y.~Fang, J.~Xu, and D.~W.~K. Ng, ``Optimal coordinated transmit
  beamforming for networked integrated sensing and communications,'' \emph{IEEE
  Trans. Wireless Commun.}, vol.~23, no.~8, pp. 8200--8214, Jan. 2024.

\bibitem{Jungnickel4726071}
V.~Jungnickel, T.~Wirth, M.~Schellmann, T.~Haustein, and W.~Zirwas,
  ``Synchronization of cooperative base stations,'' in \emph{Proc. IEEE Int.
  Symp. Wireless Commun. Syst.}, Oct. 2008, pp. 329--334.

\bibitem{Nigam6928420}
G.~Nigam, P.~Minero, and M.~Haenggi, ``Coordinated multipoint joint
  transmission in heterogeneous networks,'' \emph{IEEE Trans. Commun.},
  vol.~62, no.~11, pp. 4134--4146, Nov. 2014.

\bibitem{dwivedi2021positioning}
S.~Dwivedi, R.~Shreevastav, F.~Munier, J.~Nygren, I.~Siomina, Y.~Lyazidi,
  D.~Shrestha, G.~Lindmark, P.~Ernstr{\"o}m, E.~Stare \emph{et~al.},
  ``Positioning in 5{G} networks,'' \emph{IEEE Commun. Mag.}, vol.~59, no.~11,
  pp. 38--44, Nov. 2021.

\bibitem{stoica2007probing}
P.~Stoica, J.~Li, and Y.~Xie, ``On probing signal design for {MIMO} radar,''
  \emph{IEEE Trans. Signal Process.}, vol.~55, no.~8, pp. 4151--4161, Aug.
  2007.

\bibitem{balanis2016antenna}
C.~A. Balanis, \emph{Antenna Theory: Analysis and Design}.\hskip 1em plus 0.5em
  minus 0.4em\relax Hoboken, NJ, USA: Wiley, 2016.

\bibitem{near10664591}
Z.~Wang, X.~Mu, and Y.~Liu, ``Near-field velocity sensing and predictive
  beamforming,'' \emph{IEEE Trans. Veh. Technol.}, vol.~74, no.~1, pp.
  1806--1810, Jan. 2025.

\bibitem{near10841363}
H.~Jiang, Z.~Wang, and Y.~Liu, ``Near-field sensing: A low-complexity
  wavenumber-domain positioning method,'' \emph{IEEE Wireless Commun. Lett.},
  vol.~14, no.~4, pp. 994--998, Apr. 2025.

\bibitem{near10988518}
J.~Luo, J.~Fan, and Y.~Liu, ``Beam focusing for near-field integrated sensing
  and communications with hybrid analog/digital architecture,'' \emph{IEEE
  Trans. Wireless Commun.}, vol.~24, no.~10, pp. 8162--8177, Oct. 2025.

\bibitem{patel2002analysis}
P.~Patel and J.~Holtzman, ``Analysis of a simple successive interference
  cancellation scheme in a {DS/CDMA} system,'' \emph{IEEE J. Sel. Areas
  Commun.}, vol.~12, no.~5, pp. 796--807, Jun. 1994.

\bibitem{Xiang6211384}
Z.~Xiang and M.~Tao, ``Robust beamforming for wireless information and power
  transmission,'' \emph{IEEE Wirel. Commun. Lett.}, vol.~1, no.~4, pp.
  372--375, Aug. 2012.

\bibitem{Mathematica}
\BIBentryALTinterwordspacing
W.~R. Inc., ``Mathematica, {V}ersion 14.3,'' {C}hampaign, IL, 2025. [Online].
  Available: \url{https://www.wolfram.com/mathematica}
\BIBentrySTDinterwordspacing

\bibitem{meng2024cooperative}
K.~Meng, C.~Masouros, A.~P. Petropulu, and L.~Hanzo, ``Cooperative {ISAC}
  networks: Performance analysis, scaling laws, and optimization,'' \emph{IEEE
  Trans. Wireless Commun.}, vol.~24, no.~2, pp. 877--892, Feb. 2025.

\bibitem{han2025network}
K.~Han, K.~Meng, X.-Y. Wang, and C.~Masouros, ``Network-level {ISAC} design:
  State-of-the-art, challenges, and opportunities,'' \emph{arXiv preprint
  arXiv:2505.01295}, 2025.

\bibitem{cvx}
\BIBentryALTinterwordspacing
M.~Grant and S.~Boyd, ``{CVX}: Matlab software for disciplined convex
  programming, version 2.1,'' Mar. 2014. [Online]. Available:
  \url{http://cvxr.com/cvx}
\BIBentrySTDinterwordspacing

\bibitem{Lo1999MRT}
T.~K.~Y. Lo, ``Maximum ratio transmission,'' in \emph{Proc. IEEE Int. Conf.
  Commun. (ICC)}, vol.~2, 1999, pp. 1310--1314.

\bibitem{wu2023multiple}
Z.~Wu and L.~Dai, ``Multiple access for near-field communications: {SDMA} or
  {LDMA}?'' \emph{IEEE J. Sel. Areas Commun.}, vol.~41, no.~6, pp. 1918--1935,
  2023.

\bibitem{boyd2004convex}
S.~Boyd and L.~Vandenberghe, \emph{Convex Optimization}.\hskip 1em plus 0.5em
  minus 0.4em\relax Cambridge, UK: Cambridge University Press, 2004.

\end{thebibliography}

\end{document}